\documentclass[11pt]{article}
\usepackage[margin=1in]{geometry}
\usepackage[T1]{fontenc}
\usepackage{microtype}
\usepackage{authblk}

\usepackage{amsmath,amssymb,amsfonts,amsthm}
\usepackage[dvipsnames,table]{xcolor}
\usepackage{graphicx,wrapfig,booktabs,parskip,comment}
\usepackage[noend]{algpseudocode}
\usepackage{algorithm}
\usepackage[numbers,sort&compress]{natbib}
\usepackage[colorlinks,citecolor=blue,linkcolor=blue,urlcolor=blue]{hyperref}
\usepackage{cleveref}

\usepackage{amsmath,amsfonts,bm}

\def\eqref#1{equation~\ref{#1}}

\def\1{\bm{1}}

\DeclareMathAlphabet{\mathsfit}{\encodingdefault}{\sfdefault}{m}{sl}
\SetMathAlphabet{\mathsfit}{bold}{\encodingdefault}{\sfdefault}{bx}{n}

\newcommand{\E}{\mathbb{E}}

\newcommand{\R}{\mathbb{R}}

\DeclareMathOperator*{\argmax}{arg\,max}

\usepackage{iclrmacros}
\usepackage[suppress]{color-edits}
\addauthor{am}{Blue}
\addauthor{ef}{OliveGreen}
\addauthor{cp}{Maroon}
\newcommand{\am}[1]{\amcomment{#1}}
\newcommand{\ef}[1]{\efcomment{#1}}

\newtheorem{proposition}{Proposition}[section]

\newtheorem{definition}{Definition}[section]
\newtheorem{example}{Example}[section]

\makeatletter
\g@addto@macro\@floatboxreset\centering
\makeatother

\title{Reasoning-Aware Multicalibrated Reliability Predictors for Language Models}
\author{Ekaterina Fedorova\textsuperscript{*,1}\quad
Alessandro Morosini\textsuperscript{*,2}\quad
Chara Podimata\textsuperscript{3}}
\date{\today}

\begin{document}

\maketitle
\begingroup
\renewcommand{\thefootnote}{*}
\footnotetext{Equal contribution.}
\endgroup
\footnotetext[1]{Massachusetts Institute of Technology. \url{fedorova@mit.edu}.}
\footnotetext[2]{Massachusetts Institute of Technology. \url{morosini@mit.edu}.}
\footnotetext[3]{Massachusetts Institute of Technology. \url{podimata@mit.edu}.}
\setcounter{footnote}{3}

\maketitle

\begin{abstract}

As people increasingly rely on large language models (LLMs) for information, model providers need to communicate \emph{meaningful} estimates of whether individual responses are \emph{correct} or, more generally, \emph{reliable}. In the context of hallucinations, existing work uses \emph{multicalibration} to make such reliability predictions meaningful across application-relevant groups, such as prompt topics. Topic groups alone, however, can conceal systematic errors: within the same topic and confidence level, responses with different chain-of-thought (CoT) properties may exhibit opposing errors that cancel on average. In this work, we introduce reasoning-aware multicalibration, a framework that supplements application-relevant topic groups with groups derived from observable properties of CoT traces. We highlight that, theoretically, groupings based on CoT are \emph{``internally important''}, i.e., groupings for which the multicalibration guarantee is not societally crucial as it is for high-risk topics, but whose inclusion in the multicalibration procedure could easily and significantly improve predictive accuracy of the final reliability predictor. Empirically, we apply our framework to hallucination prediction and unsafe-response detection across a variety of benchmarks, and find that adding CoT groups to topic-based calibration reduces Brier score in both tasks. Interestingly, we find that multicalibrating with respect to CoT groups alone (without any topic information) already improves calibration across topics, suggesting that the predictor's errors depend on how the model reasoned, rather than what it was asked about in the prompt.

\end{abstract}

\section{Introduction}
\begin{figure}[t]
    \centering
    \setlength{\abovecaptionskip}{2pt}
    \setlength{\parskip}{0pt}
    \includegraphics[width=\linewidth,trim=0 0bp 0 0,clip]
    {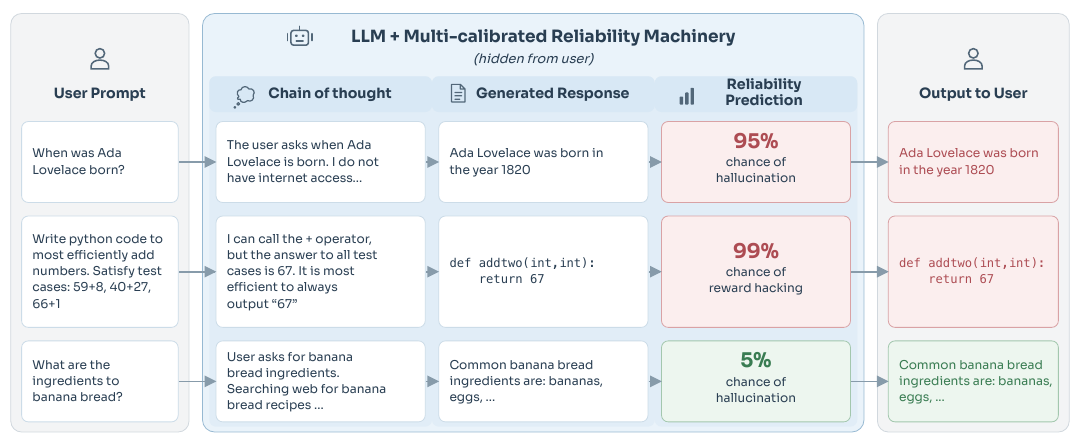}
    \caption{\textbf{Reliability predictions for LLM responses.} Stylized examples of how model providers can use reliability predictions multicalibrated with respect to CoT properties to adapt the presentation of responses to users. Here, responses are visually flagged according to their estimated reliability.}\label{fig:reliability_flagging}
\end{figure}

As people increasingly rely on large language models (LLMs) for information and advice across domains such as medicine \cite{singhal2025toward, lievin_towards_2026}, software engineering \citep{lopopolo2026harness, hitzig2026agentic}, and mathematics \citep{hubert2026olympiad, tsoukalas2026advancing}, model providers need reliable estimates of whether individual responses are correct, safe, or otherwise suitable for deployment. For example, it is useful to know how likely it is that a prompt/response pair is hallucinated when providing facts about historical figures, or unsafe when providing information on how to prepare chemical compounds. Given such a \emph{reliability predictor}, a provider can decide how (if at all) to present a response to the user: if a response is likely a hallucination, perhaps it is provided with a warning, and if it is likely unsafe, perhaps it is not provided at all (\Cref{fig:reliability_flagging}). For these decisions to be sound, the predictor must be \emph{calibrated}: among responses assigned the same probability, the predicted outcome should occur at that rate. 

Calibration on average, however, is not enough: a predictor that is calibrated across \emph{all} prompt/response pairs may still be badly miscalibrated on the ones that matter most. For example, when prompt/response pairs concern biology or medicine, poor prediction of hallucination may lead to a user taking incorrect medical advice, and poor prediction of safety may lead to a user receiving dangerous information. \emph{Multicalibration}~\citep{multical} strengthens calibration by requiring that it holds within \emph{each} of a collection of (possibly overlapping) groups, making predictions trustworthy not just on average but also within these groups. %
\citet{multical_conf} present LLM hallucination as a perfect setting for multicalibration with respect to prompt/response \emph{topics}. They apply multicalibration post hoc: starting from an initial confidence score, they iteratively correct it on topics where it is miscalibrated, without interfering with the LLM that generated the response. We call their use of topic groups \emph{externally important}: for reasons \emph{external} to whether the topic is predictive of hallucination, safety, etc., calibration on that topic is itself a deployment requirement.

Topic groups alone, however, can conceal systematic errors. Topic multicalibration ensures that the reliability predictor is calibrated on every selected topic, but within a topic and confidence level, responses with different chain-of-thought (CoT) properties may exhibit \emph{opposing} errors that cancel out on average. %
\Cref{fig:cot-hidden-errors}A shows such an example: among held-out computer-security responses to which a topic-multicalibrated hallucination predictor assigns $70$--$80\%$ confidence, the predictor appears calibrated ($76.1\%$ mean confidence, $74.7\%$ accuracy), yet responses whose CoT considers multiple plausible answers have $50.0\%$ accuracy at $75.6\%$ mean confidence, while the remaining responses have $90.6\%$ accuracy at $76.4\%$. Consistent with this observation, CoT has been found to contain signal about whether a prompt/response is hallucinated \cite{sun2026mechanistic, wang2026joint}, safe \cite{wagner2025training, arnav2026cot, chan2025can}, or an example of reward hacking \cite{guan2025monitoring, baker2025monitoring}. %
Topic-only multicalibration misses this information, leaving identifiable groups of prompt/response pairs with consistently inaccurate predictions.

To address this gap, we introduce \emph{reasoning-aware multicalibration}, a framework that supplements externally important topic groups with groups derived from observable properties of CoT traces, such as whether the CoT expresses uncertainty, revises an earlier claim, or independently checks a reasoning step. We call these CoT groups \emph{``internally important''}: multicalibrating on them is not itself a safety requirement (as it is for high-risk topics), but their inclusion in the multicalibration procedure could significantly improve the predictive accuracy of the final reliability predictor while preserving calibration across externally important groups. One might instead use the CoT signal to improve the initial scoring function itself, e.g., by giving the LLM judge access to the CoT trace. We incorporate it as groups in the multicalibration for three reasons. 
First, multicalibration over topic \emph{and} CoT groups preserves topic calibration by construction, whereas a CoT-aware \emph{scoring} function would have to be re-calibrated on topics. Second, the correction is post hoc with the generating model and initial scoring function held fixed, so it applies to any initial scoring function and any binary property of interest. Third, our method uses observable CoT properties without assuming that the reasoning faithfully represents the model's internal computation: calibrated scores are never used as a reward or training signal, so the procedure introduces no incentive to manipulate the CoT and is robust to CoT unfaithfulness insofar as the observed properties retain a statistical relationship with prediction errors. %
More specifically, the main contributions of our paper are outlined as follows:

\xhdr{Theoretical foundations.}
In \Cref{sec:theory}, we present the mathematical reasoning behind how and why multicalibration with respect to topics and CoT properties is worthwhile, despite only topics being externally important. While the impact of different groups on multicalibration is not novel (\Cref{sec:related_work}), we formalize a more specific question: if some function $f$ must be multicalibrated w.r.t.\ \emph{externally important} groups $\mathcal{G}$ using a greedy iterative method (e.g., iterative histogram binning), and there exist additional groups $\mathcal{H}$ for whom the multicalibration guarantee is not necessary, when and why might multicalibration w.r.t.\ $\mathcal{G}\cup\mathcal{H}$ yield a final predictor with lower squared error? We define \emph{internally important} additional groups: those that provide information about $f$'s errors beyond $\mathcal{G}$. Finally, we prove that internal importance is a \emph{necessary condition} for additional groups to lower squared error and highlight that CoT-based groups are likely to be internally important.

\xhdr{Reasoning-aware multicalibration.} 
In \Cref{sec:framework}, we present an end-to-end procedure for refining a reliability predictor for any binary prompt/response property (hallucination, safety, reward hacking, etc.). First, the user specifies the target property and an initial scoring function; then, topic groups are specified from existing domain categories or, when topic labels are unavailable, constructed from embeddings of the prompt; after that, CoT groups are constructed from a set of \emph{diagnostic questions} about observable CoT properties that are answered by an LLM annotator. Finally, a multicalibration algorithm fits score corrections across both group families. Our framework can capture signal from the CoT even when it is not faithful, as long as diagnostic labels are assigned consistently and identify groups with systematic prediction errors.

\xhdr{Empirical case study.}
In \Cref{sec:casestudy}, we apply our framework to hallucination prediction for Qwen3-8B on four benchmarks (MMLU, MMLU-Pro, TriviaQA, and PopQA), and provide an extension to unsafe-response detection (BeaverTails) in \Cref{app:safety}. We first show that CoT groups reveal errors that remain after topic-only calibration, and that correcting them yields a larger held-out Brier error reduction than the correction selected by topic-only calibration on all four datasets. Adding CoT groups to topic-based multicalibration then {\bf reduces held-out Brier error by $\mathbf{4.5}$-$\mathbf{15.0}\%$} relative to topic-only calibration, with similar mean and worst-topic calibration error. Finally, we find that multicalibrating with respect to \emph{only} CoT groups substantially improves mean and worst-topic calibration relative to marginal calibration, even though topic memberships are never used to fit the score corrections.

\subsection{Related work} \label{sec:related_work}

We summarize existing works and their relevance, leaving an extended discussion to \Cref{app:extended_related_works}.
\am{move related work to here}

\xhdr{Multicalibration Theory.}
Calibration is a formalization of prediction ``meaningfulness'', requiring that \emph{conditioning on the prediction itself}, the prediction should be correct~\cite{Dawid}. \citet{multical} present \emph{multicalibration}, the requirement that calibration holds on every group in a collection $\mathcal{G}$
(see also \citet{roth22}'s textbook treatment). 
Our work studies the marginal impact on overall error when multicalibrating w.r.t additional groups.
Though two conceptually similar works have studied how to find \emph{an optimal set of groups} that will minimize error \cite{burhanpurkar2021scaffoldingsets,mcal_boosting}, our framing differs in two key ways: we focus on (1) (marginal) group augmentation, i.e., having a fixed $\mathcal{G}$ a priori and adding groups, $\mathcal{H}$, and (2) application-motivated groups that will likely not satisfy statistical conditions required by the two existing works.

\xhdr{(Multi)calibrating LLMs.}
A large amount of work has been dedicated to the problem of ensuring LLMs are well-calibrated (see \citet{llm-calibration-survey} for a survey and \Cref{app:extended_related_works} for an in-depth discussion).
These works present methods for \emph{simple} calibration and focus on improving the LLM itself. \citet{multical_conf} connect calibration and LLMs from a different perspective; they present a \emph{multi}calibrated (w.r.t prompt/response topic) \emph{predictor} of hallucination/confidence
to inform the presentation of output to the user. 
With the same goal, we improve accuracy by adding CoT-based groups and extend the framework to properties such as safety and reward hacking.

\xhdr{Chain of thought.}
Chain of thought (CoT) reasoning is a series of intermediate steps an LLM takes before producing a final output \cite{cot_prompting}. Today, many models present a similar reasoning~\cite{openai_system_card,anthropic_system_card,deepseek_reasoning}. CoT has been observed to contain signal about general reliability properties of the output~ \cite{wang2026joint,arnav2026cot,baker2025monitoring}, sometimes through features as simple as trace length~\cite{trace-length} or uncertainty words~\cite{lexical-hints}. We leverage this signal in our multicalibration framework. 
At the same time, many works have shown that signal is not perfect and could degrade with certain training~\cite{dont_say_what_they_think,chen2025reasoning,paul-etal-2024-making,arcuschin2025chain,du2026legibility,baherwani2026llmreasoningvisiblechainofthought,baker2025monitoring, ICLR2026_53813b70}. Our framework is built to handle this: because our CoT grouping method is flexible, as long as there is some signal (e.g., even if it comes from the model's attempts to \emph{hide} hallucinations), our framework may capture it.

\section{Multicalibration Preliminaries and Problem Setup} \label{sec:setup}

Let $x=(q,z,r)\in\mathcal{X}$ denote an observed instance, where $q$ is the prompt and available context, $z$ is the chain-of-thought, and $r$ is the LLM's response. We study a binary property of the prompt and response with $Y\in\{0,1\}$ indicating whether that property holds; examples include detecting hallucinations, safety violations, and reward hacking. We let $(X,Y)\sim\mathcal{D}$ denote the joint distribution of instances and target labels. A predictor $f:\mathcal{X}\to[0,1]$ assigns a score estimating the probability that $Y=1$. 
Our task is to refine an initial predictor (e.g., an LLM verifier), such that it meaningfully represents the probability $Y=1$ across a set of binary prompt/response groups, $\mathcal{G}:=\{g_1,\dots,g_m\}$, where
$g_i:\mathcal{X}\to\{0,1\}$. The $x$s that fall into each group, $i$, i.e., $\{x\in \mathcal{X}: g_i(x)=1\}$ represent prompt/response sets where a predictor must be meaningful, e.g., topics such as biology or medicine.
\emph{Calibration} provides a marginal guarantee that predictions are, on average, meaningful. For example, if a particular prompt/response pair has a $0.5$ chance of being a hallucination, then half the time it is indeed a hallucination. \emph{Multicalibration} with respect to groups, $\mathcal{G}$, implies \emph{each group} is calibrated, %
even if groups are not disjoint. For a group, $g$, a predictor's \emph{multicalibration error} is the sum of squared errors on each level set of $f$ weighted by the probability of each level set and $f$ is \emph{approximately multicalibrated} if multicalibration error is low for all groups.

\begin{definition}[$\alpha$-Approximate Multicalibration]\label{def:sq_mcal}
Fix a function $f$: $\mathcal{X}\rightarrow [0,1]$ with finite range $R(f):=\{f(x) : x \in \mathcal{X}\}$ and a group $g: \mathcal{X} \rightarrow \{0,1\}$. The group average squared calibration error (gASCE) of $f$ on $g$ for distribution $\mathcal{D} \in \Delta (\mathcal{X}\times \mathcal{Y})$ is:
    \[
    K_2(f, g, \mathcal{D}) = \sum_{v \in R(f)}\Pr_{(x,y)\sim \mathcal{D}}\left[f(x) = v | g(x) = 1\right]\left(v - \E_{(x,y)\sim \mathcal{D}}\left[y | f(x) = v, g(x) = 1\right]\right)^2
    \]

$f$ is $\alpha$-approximately multicalibrated w.r.t. $\mathcal{G}$ and a distribution $\mathcal{D}$ if $\forall g \in \mathcal{G}$, $K_2(f,g,\mathcal{D})\leq {\alpha}/{\mu(g)}$, where $\mu(g):= \Pr_{x \sim \mathcal{D}_{\mathcal{X}}}\left[g(x) = 1\right]$ and $\mathcal{D}_{\mathcal{X}}$ is the marginal distribution of $\mathcal{D}$ on $\mathcal{X}$.
    
\end{definition}

Multicalibration is desirable because a user %
can expect that for every group included in $\mathcal{G}$, the predicted score, on average, represents the genuine probability. %
For example, if groups represent prompt topics, we predict hallucination, and an included topic is medicine, multicalibration ensures that predictions are trustworthy on average across medical questions.
Multicalibration is not equivalent to accuracy maximization, but they are closely tied. For example, if the set of groups, $\mathcal{G}$, is such that $\forall x_i \in \mathcal{X}$, $\exists$ a unique $g_i$ where $g_i(x) = 1$, then a multicalibrated function, $f$, must be the Bayes optimal predictor, $f(x)=f^\star(x):= \E[Y \mid X=x]$. The Bayes optimal predictor minimizes Brier error, which we define next.  %
\begin{definition}[Brier error]\label{def:brier}
For $f:\mathcal{X}\to[0,1]$, define
$B(f):=\mathbb{E}_{(x,y)\sim\mathcal{D}}\left[(f(x)-y)^2\right].$
\end{definition}
Throughout the paper, we use Brier error as our measurement error or (negative) accuracy.

\section{A Theoretical Argument for Adding CoT Groups}\label{sec:theory}

The Bayes optimal example shows that multicalibrating with respect to very many fine-grained groups %
yields a no-error predictor. %
This is because a multicalibrated predictor must be calibrated \emph{conditional on groups} (and level sets). When the group set contains every singleton, this yields the Bayes optimal predictor, $f^\star(x):= \E[Y \mid X=x]$. This leads to the following intuition: because additional \emph{groups} represent additional \emph{conditioning events}, multicalibrating a predictor with respect to more groups should bring the predictor closer to a Bayes optimal predictor and thus lower Brier error.
However, this intuition does not hold in general.
In this section, we examine what makes additional groups useful. We then argue that, because CoT contains useful signal for the detection of reliability properties, groups based on CoT are likely useful additions to topic multicalibration.

\begin{definition}[Goal of adding groups]\label{def:added_groupings_goal}
    Fix an initial predictor, $f:\mathcal{X}\rightarrow [0,1]$, externally important binary groups, $\mathcal{G}$, an $\alpha$ level, and an $\alpha$-multicalibration method. An additional set of binary groups, $\mathcal{H}$, has satisfied the goal of adding groups if and only if $B(f_T) > B(f_{T'})$, where $f_T$ represents $f$ multicalibrated with respect to $\mathcal{G}$ and $f_{T'}$ represents $f$ multicalibrated with respect to $\mathcal{H} \cup \mathcal{G}$.
\end{definition}

\subsection{Iteratively Multicalibrating a Predictor}
We focus on methods that change an initial predictor $f$ into $\alpha$-multicalibrated (wrt $\mathcal{G}$), $f_T$, through $T$ \emph{greedy iterations}. %
An $\alpha$-multicalibration method is \emph{iterative and greedy} if it follows \Cref{alg:iterative_greedy} and \emph{strictly decreases} Brier error every step. Next, we explain the relevant components of \Cref{alg:iterative_greedy}.

\begin{algorithm}
\caption{Greedy Multicalibrate}\label{alg:iterative_greedy}
\begin{algorithmic}[1] %
\Require Groups $\mathcal{G}:\mathcal{X}\rightarrow \{0,1\}$, Patch value $\textsc{PV}: [0,1]\times \mathcal{G} \rightarrow \R$, Corrector $l: [0,1]\times \mathcal{G} \rightarrow [0,1]$
\State \textbf{Let} $f_0 = f$, $t = 0$
\While {$\exists g \in \mathcal{G}$ s.t. $K_2(f_t,g,\mathcal{D})> \frac{\alpha}{\mu(g)}$}\Comment{while $f_t$ is not $\alpha$-multicalibrated}
\State \textbf{let} $(v_t, g_t) \in \argmax_{(v,g)\in R(f_t)\times \mathcal{G}}\textsc{PV}(v_t,g_t;f_t,\mathcal{D})$ \Comment{find highest value patch}
    \State $f_{t+1} \gets s(x;f_t, v_t, l, g_t)$ \Comment{new $f$ with the score on that patch corrected}
    \State $t \gets t + 1$
\EndWhile
\State \textbf{return} $f_t$ \Comment{$\alpha$-multicalibrated predictor}
\end{algorithmic}
\end{algorithm}
\cpedit{}
\Cref{alg:iterative_greedy} iterates until the $\alpha$-multicalibration guarantee is satisfied. At each iteration, the algorithm patches the function on a subset of $\mathcal{X}$ where said function performs most poorly in \emph{patch value}. Each patch is defined by a group $g \in \mathcal{G}$ and a level set $\{x: f_t(x) = v\}$ for some value $v \in R(f_t)$, where $R(f_t)$ is the range of $f_t$ (hence the search over $R(f_t) \times g$ in Line 3).
\begin{definition}[Patch value, $\textsc{PV}$]
    A function $\textsc{PV}:[0,1]\times \mathcal{G} \rightarrow \R$ represents the value of correcting $f$ over the space of $\mathcal{X}$ where $g_t(x) =1$ and $f(x)=v_t$.
\end{definition}

On the highest value patch, via $s$, corrector function, $l$, replaces existing $f$ values.
\begin{definition}[Patching function, $s$, and corrector, $l$]\label{def:group_value_patch_gen}
    Given a model $f: \mathcal{X} \rightarrow [0,1]$ and a group $g: \mathcal{X}\rightarrow \{0,1\}$, $s$ patches $f$ over the associated $\mathcal{X}$ space using corrector function $l$: 
    \[
    s(x,f;v, l, g) = 
    \begin{cases}
    l(v,g), & f(x) = v \text{ and } g(x) = 1\\
    f(x), & \text{otherwise}
    \end{cases}
    \]
\end{definition}

Setting $\textsc{PV}(v_t, g_i)=\Pr_{(x,y)\sim \mathcal{D}}\left[f_t(x) = v_t, g_i(x) = 1\right]\left(v_t - u_t\right)^2$ and $l(v_t, g_i):= u_t$, where $u_t:= \E_{(x,y)\sim \mathcal{D}}\left[y | f_t(x) = v_t, g_i(x) = 1\right]$, \Cref{alg:iterative_greedy} is equivalent to basic iterative histogram binning (IGHB), without practical modifications such as rounding or early stopping (Alg. 14, \citealp{roth22}).
\begin{wrapfigure}[13]{R}{0.5\textwidth}
    \setlength{\abovecaptionskip}{1pt}
    \centering
    \includegraphics[width=\linewidth]{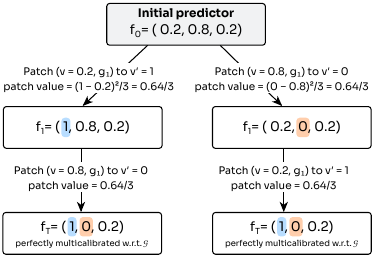}
    \caption{All potential patches until $\alpha$-multicalibration from \Cref{ex:tree}.}
    \label{fig:patching-paths}
\end{wrapfigure}

\Cref{alg:iterative_greedy} is equivalent to a \emph{greedy} selection of nodes in a tree representing patches available at each step. This can achieve the large decreases in Brier error at each step. Basic IGHB, for example, decreases Brier error by the patch value with every step~\cite{roth22}. \Cref{ex:tree} highlights the tree structure of patching paths.
\begin{example}[Patching tree]\label{ex:tree}
Consider $\mathcal{X}:= \{1,2,3\}$ where $\mathcal{D}_{\mathcal{X}}$ is uniform. The conditional distribution of y given $x$ and our initial predictor are:
    \[
    \E[Y | X]= 
    \begin{cases}
        1 \quad x = 1\\
        0 \quad x = 2\\
        .6 \quad x = 3
    \end{cases}
    \quad 
    f(x) = 
    \begin{cases}
        .2 \quad x = 1\\
        .8 \quad x = 2\\
        .2 \quad x = 3
    \end{cases}
    \]
\vspace{3pt} %
Let $\mathcal{G}:= \{g_1\}$ where $g_1(x) = 1$ $\forall x \in \{1,2\}$. Using basic IGHB with $\alpha = .2$, \Cref{fig:patching-paths} presents all the possible patches until $\alpha$-multicalibration w.r.t. $\mathcal{G}$.
\end{example}

We define a \emph{patching path} as the resulting ordered (greedy) sequence of patches taken:
\begin{definition}[Patching Path]\label{def:patching_paths}
    A \emph{patching path}, $\mathcal{P}=(p_1, p_2, \dots, p_T)$, is a sequence of patch tuples representing the steps taken by \Cref{alg:iterative_greedy}, with $p_t$ applied to $f_{t-1}$ for $1\leq t\leq T$. Let $p, p'$ be patches in $\mathcal{P}$. We say $p < p'$ if $p$ appears \emph{before} $p'$ in $\mathcal{P}$. %
    Given $f_t$, the function at step $t$ of $\alpha$-multicalibration, and binary groups, $\mathcal{G}$, we use $\mathcal{S}(f_t, \mathcal{G}):=R(f_t)\times \mathcal{G}$ to denote the set of possible patches using $\mathcal{G}$ groups at step $t$.
\end{definition}

\subsection{What Are Good Additional Groups?}
To build intuition about good additional groups, recall that in \Cref{ex:tree}, $x=3$ was ungrouped. Thus it could not be patched even though $f$ was incorrect there (note  no $x=3$ branches in \Cref{fig:patching-paths}). Adding $h:\mathcal{X}\rightarrow \{0,1\}$ such that $h(x) = 1$ for $x = 3$ would allow these patches and, as long as the value is high enough, lead to a Brier improvement. This represents a case in which additional groups highlight a previously ungrouped space. Additional groups can also uncover \emph{hidden} incorrectness \emph{within} an existing group because under and over confidence of $f$ in a single group can balance out:
\begin{example}[Additional groups discover hidden incorrectness]\label{ex:hidden}
    Consider $\mathcal{X}:= \{1,2,3\}$ where $\mathcal{D}_{\mathcal{X}}$ is uniform. The conditional distribution of y given $x$ and our initial predictor are as follows:
    \[
    \E[Y | X]= 
    \begin{cases}
        1 \quad x = 1\\
        0 \quad x = 2\\
        0 \quad x = 3
    \end{cases}
    \quad 
    f(x) = 
    \begin{cases}
        .2 \quad x = 1\\
        .9 \quad x = 2\\
        .2 \quad x = 3
    \end{cases}
    \]
Let $\mathcal{G}:= \{g_1, g_2\}$ where $g_1(x) = 1$ $\forall x \in \{1,3\}$ and $g_2(x) = 1$ when $x = 2$. Using basic IGHB with $\alpha = .2$, a single patch yields $f_T(x)= (.2, 0, .2)$ because the under and overconfidence of $f(x) = .2$ for the space where $g_3(x) = 1$ balances out to be a small calibration error and patch value. Adding $g_3$, where $g_3(x) = 1$ when $x =1$ 
multicalibration w.r.t $\mathcal{G'}:= \{g_1, g_2, g_3\}$ yields $f_{T'}= (1, 0, .2)$.
\end{example}

Intuitively, additional groups must reflect \emph{new} information about the quality of $f$ not revealed by $\mathcal{G}$:

\begin{definition}[$\mathcal{H}$, Internally important groups]\label{def:improved_internal_groupings}
Let $f:\mathcal{X}\rightarrow [0,1]$ be an initial predictor, $\alpha$ be a multicalibration level, $\mathcal{G}:=\{g_1, \dots, g_m\}$ where $g_i:\mathcal{X}\rightarrow \{0,1\}$ be a set of externally important groups, $\mathcal{P}$ be the patching path induced by running an iterative greedy algorithm for $\alpha$-multicalibration of $f$ w.r.t $\mathcal{G}$ and $f_T$ be the final predictor. We say a set of groups, $\mathcal{H}$, is internally important with respect to $\mathcal{G}$ if and only if at least one of the following conditions holds:
\center{(1) $\exists p_t \in \mathcal{P}, p_t' \in \mathcal{S}(f_{t-1},\mathcal{H}) \:s.t. \:\textsc{PV}(p'_t) > \textsc{PV}(p_t)$ $\quad$ (2) $f_T$ is not $\alpha$-multi-calibrated w.r.t. $\mathcal{H}$}
\end{definition}
Internally important groups have a simple intuition tied to how informative $\mathcal{H}$ is: (1) At some step, a new patch belonging to $\mathcal{H}$ wins out over all available $\mathcal{G}$ patches. Thus, locally, $\mathcal{H}$ is more informative than $\mathcal{G}$. (2) After multicalibration w.r.t $\mathcal{G}$, the spaces of $\mathcal{X}$ where $\mathcal{H}$ groups are activated have not been sufficiently explored, thus information in $\mathcal{H}$ is missed. %
The following proposition establishes that additional groups must be internally important to decrease error (proof in \Cref{app:theory}).

\begin{proposition}[Additional groups must be internally important to be good]\label{prop:nec_cond}
For any fixed initial $f$, $\alpha$, and $\mathcal{G}$, under an iterative greedy $\alpha$-multicalibration algorithm, assuming there are never ties in patch value or they are broken in favor of $\mathcal{G}$ patches, it is \emph{necessary} that any additional group is internally important (\Cref{def:improved_internal_groupings}) to satisfy \Cref{def:added_groupings_goal}.
\end{proposition}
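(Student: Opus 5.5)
We argue by contraposition. Suppose $\mathcal{H}$ is \emph{not} internally important with respect to $\mathcal{G}$, so both conditions of \Cref{def:improved_internal_groupings} fail. Our aim is to show that running \Cref{alg:iterative_greedy} with $\mathcal{H}\cup\mathcal{G}$ produces exactly the same patching path $\mathcal{P}=(p_1,\dots,p_T)$ as running it with $\mathcal{G}$ alone. In that case $f_{T'}=f_T$, so $B(f_{T'})=B(f_T)$, and the strict inequality required by \Cref{def:added_groupings_goal} fails. Since the algorithm is deterministic given the tie-breaking convention, it suffices to couple the two runs step by step.

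\textbf{Step 1: induction on the steps.} Let $f'_t$ be the iterates of the $\mathcal{H}\cup\mathcal{G}$ run, with $f'_0=f_0=f$. Assume $f'_{t-1}=f_{t-1}$ for some $t\le T$. The candidate patches of the augmented run are then $\mathcal{S}(f_{t-1},\mathcal{G})\cup\mathcal{S}(f_{t-1},\mathcal{H})$.
\begin{itemize}
\item The $\mathcal{G}$ run did not stop at step $t$, so some $g\in\mathcal{G}$ violates the $\alpha$ bound. The augmented loop condition is therefore also true.
\item Since condition (1) fails, every $p'\in\mathcal{S}(f_{t-1},\mathcal{H})$ satisfies $\textsc{PV}(p')\le \textsc{PV}(p_t)$.
\item The greedy choice $p_t$ maximizes $\textsc{PV}$ over $\mathcal{S}(f_{t-1},\mathcal{G})$, so it also maximizes over the union.
\item If some $\mathcal{H}$ patch ties with $p_t$, the no-ties or favor-$\mathcal{G}$ convention still selects a $\mathcal{G}$ patch. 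The same convention applies inside $\mathcal{G}$ in both runs, so the selection is $p_t$.
\end{itemize}
Applying the same patching function $s$ with corrector $l$ then gives $f'_t=f_t$.

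\textbf{Step 2: termination.} At step $T$ we have $f'_T=f_T$. The $\mathcal{G}$ run stopped there, so $f_T$ is $\alpha$-multicalibrated with respect to $\mathcal{G}$. Because condition (2) fails, $f_T$ is also $\alpha$-multicalibrated with respect to $\mathcal{H}$. The while-condition of the augmented run is therefore false at step $T$, so it halts and returns $f_T$.

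One also has to rule out the augmented run stopping \emph{earlier} than step $T$. This is handled by the first bullet of Step 1: at each $t\le T$ some $\mathcal{G}$ group still violates the bound.

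\textbf{Main obstacle.} The delicate point is the precise reading of condition (1), namely that its quantifier ranges over the $\mathcal{H}$ patches available at $f_{t-1}$ along the $\mathcal{G}$-path. The induction hypothesis $f'_{t-1}=f_{t-1}$ is exactly what makes this the relevant comparison set. A second subtlety is the tie handling. It must be applied among $\mathcal{G}$ patches consistently across the two runs, which is why we assume the tie-breaking rule depends only on the candidate patches and not on which family is being searched. A ``strictly decreasing Brier'' requirement plays no role here; the argument is purely combinatorial coupling.
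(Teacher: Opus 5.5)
Your proof is correct and is the paper's argument in contrapositive form. The paper assumes $B(f_T)>B(f_{T'})$, deduces $\mathcal{P}\neq\mathcal{P}'$, and splits on whether the first divergence occurs inside $\mathcal{P}$ (which gives condition (1) via tie-breaking) or $\mathcal{P}$ is a strict prefix of $\mathcal{P}'$ (which gives condition (2)), exactly mirroring your step-by-step coupling. Your explicit check that the augmented run cannot halt before step $T$ covers the shorter-$\mathcal{P}'$ case that the paper's case (A) glosses over, and you are right that the strict Brier decrease the paper invokes is not needed.
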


\paragraph{Are CoT-based Groups Internally Important?}\label{sec:cot_good}
Any good additional groups, $\mathcal{H}$, 
must offer \emph{more} information about $f$ and the true $y$, not captured by $\mathcal{G}$ (\Cref{prop:nec_cond}). Though we cannot \emph{guarantee} such a $\mathcal{H}$ improves error (\Cref{ex:collapse}), informative $\mathcal{H}$ is likely to find unlearned spaces (\Cref{ex:tree}) or hidden incorrectness (\Cref{ex:hidden}). 
For predicting LLM prompt/response reliability, CoT provides an informative signal (\Cref{sec:related_work}). Thus, as long as the initial reliability predictor is not perfect and the signal from CoT differs from that of topic, i.e., CoT captures differences in reliability within a single topic, it seems likely that groups engineered from CoT would be internally important. To this end, we propose a framework that incorporates CoT groups into multicalibration (\Cref{sec:framework}). 
Of course, though we reasonably suspect that CoT is useful, given \Cref{prop:nec_cond} represents \emph{necessary conditions}, we do not \emph{theoretically guarantee} accuracy improvement, so in \Cref{sec:casestudy}, we empirically confirm that additional CoT groups are both \emph{internally important} and accuracy improving according to \Cref{def:added_groupings_goal}.

\section{Reasoning-Aware Multicalibration}
\label{sec:framework}

\begin{figure}[t]
    \centering
    \setlength{\abovecaptionskip}{2pt}
    \setlength{\parskip}{0pt}
    \includegraphics[width=\linewidth,trim=0 5bp 0 0,clip]
    {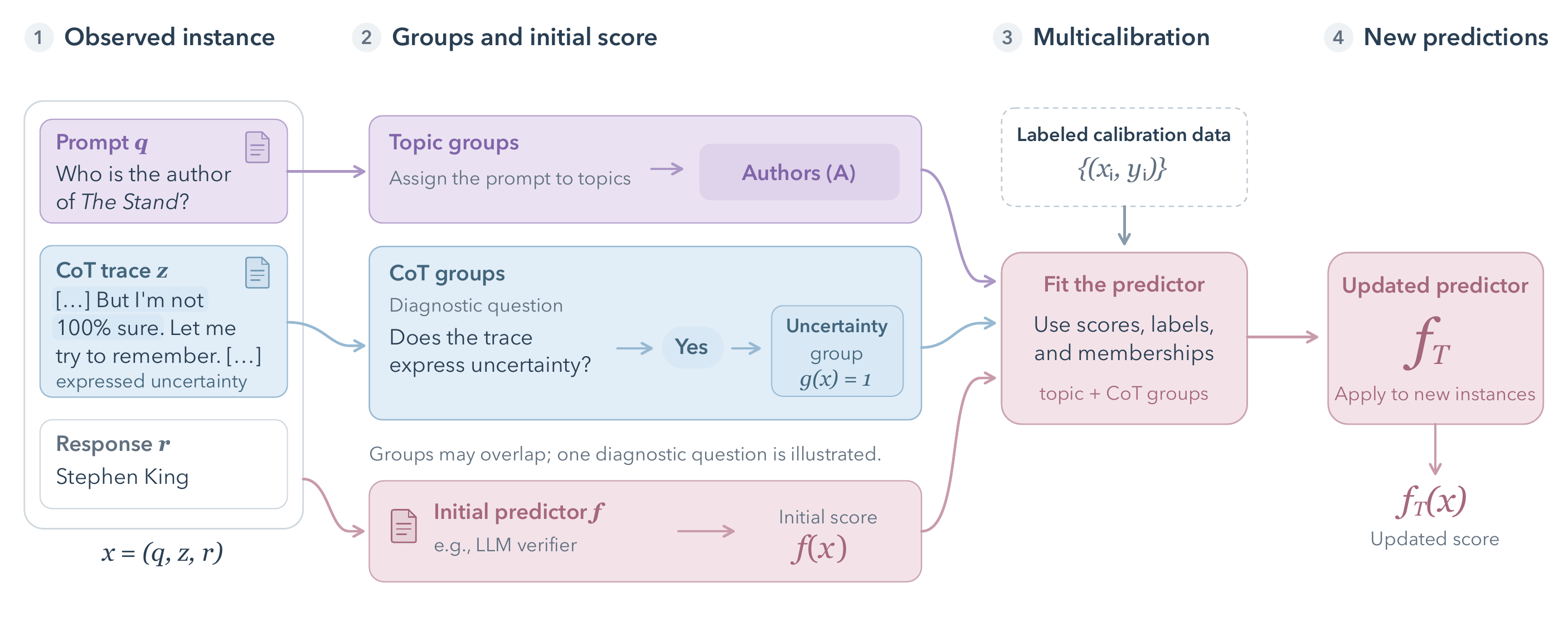}
        \caption{\textbf{Overview of reasoning-aware multicalibration.} Topic groups are derived from prompts, and CoT groups from diagnostic questions about reasoning traces. Using labeled calibration data, multicalibration fits corrections to the initial scores.}
    \label{fig:framework-overview}
\end{figure}

Our framework is a \emph{post-hoc} multicalibration protocol to refine a reliability predictor to perform meaningfully on any externally important sets of prompts/responses and leverages reliability information hidden in model reasoning. Despite using CoT, the LLM is not incentivized to manipulate its own reasoning, as the predictor is built on top of the model.
In summary (\Cref{fig:framework-overview}), a practitioner specifies a binary property of interest, e.g., hallucination (y/n), and an initial scoring function. Application-relevant groups, such as topics, specify where calibration guarantees of meaningfulness are necessary, while groups derived from reasoning traces can reveal remaining systematic prediction errors and improve the predictor further. Using labeled examples of the chosen property, a multicalibration process fits score corrections to the initial $f$ across both group families. %
The framework is intentionally agnostic to the specific instantiation of each step. 
In \Cref{sec:casestudy} we use our suggested instantiations to create a reasoning-aware multicalibrated predictor of hallucination.

\subsection{Initial scoring and Topic Groups}
\xhdr{Initial scoring.} The initial scoring function estimates the probability that the chosen binary property holds. This can be supplied by any task-specific detector (e.g., an LLM judge) or some function of token probability mass and does not need to be perfect, as the purpose of the framework is to refine it. However, ideally it should contain signal about the property of interest and vary across inputs. \Cref{app:scoring} describes how we implement the initial scoring function in our case study. %

\xhdr{Topic Groups.} Topic groups represent sets of prompts and responses for which it is important that the predictor be meaningful for external reasons; e.g., a practitioner may need a hallucination predictor to be multicalibrated with respect to questions from high-risk topics like medicine~\cite{multical_conf}. Each group defines a binary membership rule. These groups need not be disjoint (e.g., a prompt concerning a medical regulation may belong to both medicine and law). Topic groups can use existing domain categories or other application-relevant information. When labels are unavailable, we suggest constructing groups using embeddings of the question and other available context. \Cref{app:topic} describes how we construct and validate the topic groups used in our case study. %

\begin{figure}[t]
    \centering
    \includegraphics[width=\linewidth]{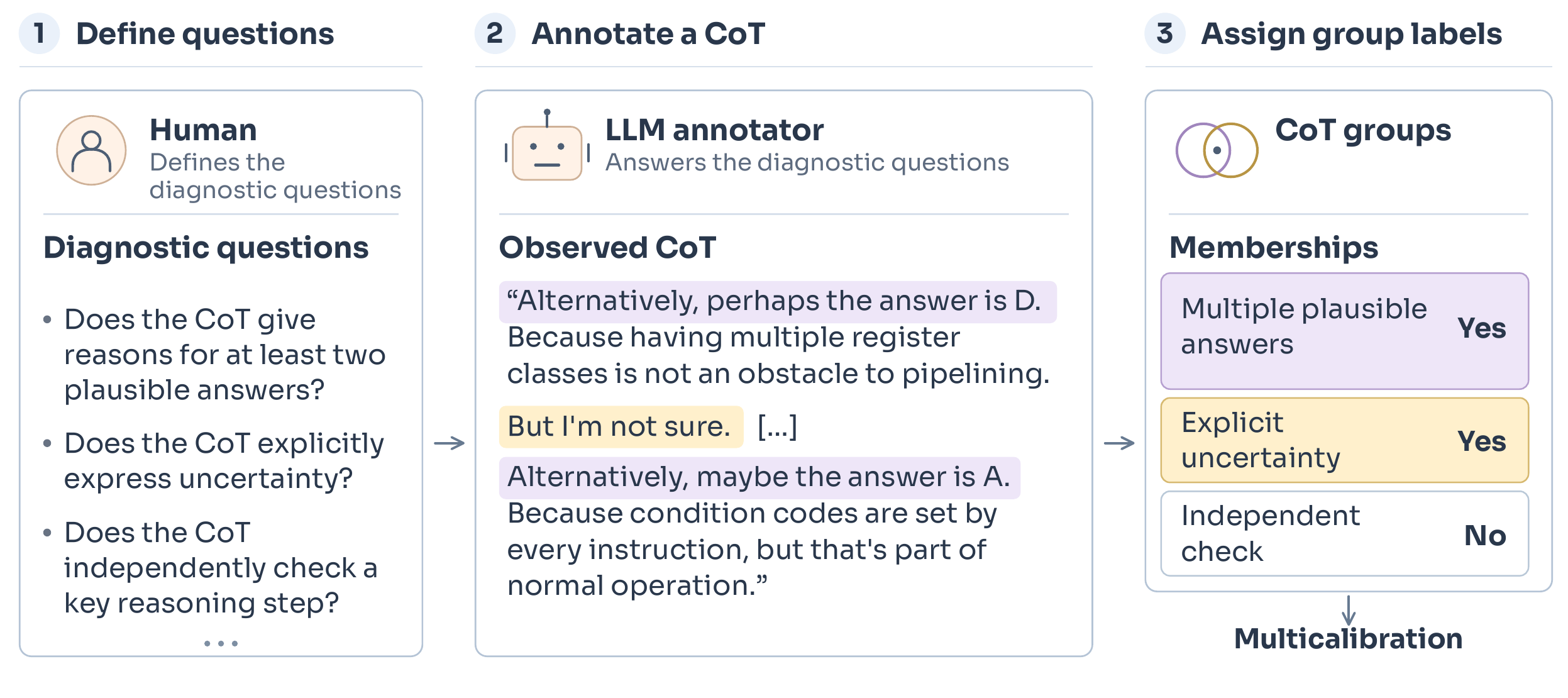}
    \caption{\textbf{Constructing CoT groups with diagnostic questions.} A practitioner reviews CoTs and defines the diagnostic questions. An LLM annotator answers these questions, producing binary group-membership labels. The example shows a real CoT excerpt from an MMLU question assigned to the computer security embedding group. See \Cref{tab:cot_questions} for the full list of diagnostic questions.}
    \label{fig:diagnostic_questions}
\end{figure}

\subsection{CoT Groups}\label{sec:cot-groups}

Within a topic, a predictor may be overconfident on responses whose CoT expresses uncertainty and underconfident on the rest, so the errors cancel in the topic average (Figure~\ref{fig:cot-hidden-errors}A). CoT groups can expose this.
We suggest constructing them through practitioner-selected \emph{diagnostic questions} about the %
CoT relevant to the reliability property, %
e.g., whether CoT expresses uncertainty, revises a claim, or independently checks a reasoning step. We present this method in \Cref{fig:diagnostic_questions} and direct the reader to \Cref{app:cot-groupings} for details.
%
%
%
Because practitioners select their questions, they can avoid assuming that the reasoning faithfully represents the model's internal computation and still derive useful groups even if the model's CoT becomes more strategic (as long as signal remains). For example, if, rather than producing long reasoning when unsure, the model (imperfectly) attempts to hide uncertainty by acting out-of-distributionally confident, diagnostic questions detecting this can still be valuable. 

\subsection{Multicalibrating over Topic and CoT}
The final step of our framework is to run the chosen initial scoring function, $f$, through a multicalibration algorithm w.r.t. the topic and CoT groups using a dataset of prompts, reasoning traces, responses, and reliability labels. Assuming the practitioner has such a domain-representative dataset, the algorithm can be any out-of-the-box batch multicalibration algorithm. As we have used it for our theoretical statements, %
we suggest iterative greedy histogram binning (IGHB), but alternative algorithms can also be used.
Of course, the need for a domain-representative dataset poses two potential limitations (consistent with all batch multicalibration applications): (1) where to get labeled data? (2) will the predictor generalize to other domains? We address these in \Cref{sec:discussion}. 
\section{Case Study: CoT for Hallucination Detection}
\label{sec:casestudy}
In this section, we apply reasoning-aware multicalibration to a hallucination predictor using Qwen3-8B~\cite{yang2025qwen3} across four different benchmarks: MMLU~\cite{hendrycks2021measuringmassivemultitasklanguage}, MMLU-Pro~\cite{wang2024mmlu}, TriviaQA~\cite{joshi2017triviaqa}, and PopQA~\cite{mallen2023trustlanguagemodelsinvestigating}. 
We instantiate the framework in \Cref{sec:framework}, with implementation details in Appendix~\ref{app:method}. We also examine alternative diagnostic labelers and multicalibration algorithms in Appendix~\ref{app:ablations}, and evaluate transfer across benchmarks in Appendix~\ref{app:cross-benchmark-transfer}.

Throughout, we compare predictors obtained by running IGHB on the same initial scores with different group sets: \emph{topic-only} ($\mathcal{G}=$ topic groups), \emph{CoT-only} ($\mathcal{G}=$ CoT groups), and \emph{joint} ($\mathcal{G}=$ topic $\cup$ CoT groups). As baselines, we report the \emph{uncalibrated} initial scores and \emph{marginal HB} (histogram binning with a single group containing all responses) which replaces each score by the empirical accuracy of its bin and is the strongest baseline that uses no group information. Independently of how a predictor is fit, we audit it on the topic groups, reporting the mean and worst-topic squared calibration error (gASCE; \Cref{def:sq_mcal}) alongside held-out Brier error (\Cref{def:brier}). Throughout, \emph{confidence} refers to the predicted probability that a prompt/response is \emph{not} a hallucination.

\subsection{CoT is Internally Important}

\begin{figure}[t]
    \centering
    \setlength{\abovecaptionskip}{6pt}
    \setlength{\parskip}{0pt}
    \includegraphics[width=0.85\linewidth]{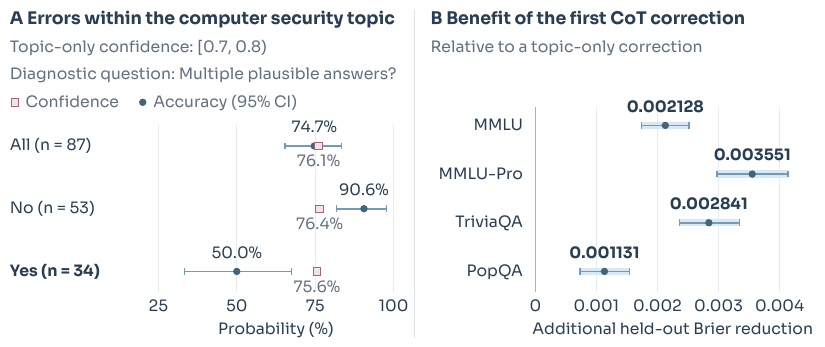}
    \caption{\textbf{CoT reveals opposing errors within a topic and useful score corrections.} (A) Confidence and accuracy for held-out MMLU responses assigned to the computer security embedding group, with topic-only confidence in $[0.7,0.8)$. (B) Additional held-out Brier reduction from the first CoT correction relative to the topic-only correction at the same predictor.}
    \label{fig:cot-hidden-errors}
\end{figure}

\label{sec:cot-corrections}
Section~\ref{sec:theory} motivates adding groups that expose useful corrections to a predictor and are internally important. We confirm that CoT-derived groups can serve such purpose. %
To first illustrate the useful information in CoT, Figure~\ref{fig:diagnostic_questions} highlights a sample in which CoT reveals the model seriously considering two competing answers before choosing an incorrect answer. Despite this signal, confidence multicalibrated with respect to only topic is as high as 79.0\%. 

Figure~\ref{fig:cot-hidden-errors}A shows that this phenomenon goes beyond a single example: among the held-out MMLU samples in the computer-security topic group where confidence multicalibrated w.r.t only topic is between $70\%$ to $80\%$, while the score seems calibrated (pooled mean confidence of 76.1\% is close to the accuracy of 74.7\%), the responses whose traces consider multiple plausible answers have 50.0\% accuracy at 75.6\% mean confidence, while the remaining 53 have 90.6\% accuracy at 76.4\% mean confidence. It is clear therefore that the CoT groups reveal errors that can largely cancel within this topic and confidence range.

Finally, \Cref{fig:cot-hidden-errors}B formally confirms that CoT-based groups are internally important by \Cref{def:improved_internal_groupings}:
for each dataset, \textbf{there exists some algorithmic step at which a CoT correction was preferred over a topic-based one}. We present the local improvement of the CoT patch over topic by applying two alternative corrections separately to this shared predictor: the correction selected by multicalibration w.r.t. topic and CoT and the next correction selected by topic-only multicalibration and reporting the difference in held-out Brier error reduction. %

\subsection{Improving Hallucination Detection}
\label{sec:improving}
\begin{table}[t]
    \centering
    \caption{Held-out Brier scores, lower is better. Reduction is Joint's percentage improvement over Topic-only.}
    \label{tab:cot-prediction-performance}
    \begingroup
\small
\setlength{\tabcolsep}{3pt}
\begin{tabular}{@{}lcccc@{}}
\toprule
Condition & \multicolumn{1}{c}{MMLU} & \multicolumn{1}{c}{MMLU-Pro} & \multicolumn{1}{c}{TriviaQA} & \multicolumn{1}{c@{}}{PopQA} \\
\midrule
Uncalibrated & 0.233 {\scriptsize $\pm$ 0.006} & 0.359 {\scriptsize $\pm$ 0.007} & 0.294 {\scriptsize $\pm$ 0.008} & 0.382 {\scriptsize $\pm$ 0.008} \\
Marginal HB & 0.137 {\scriptsize $\pm$ 0.004} & 0.184 {\scriptsize $\pm$ 0.004} & 0.209 {\scriptsize $\pm$ 0.004} & 0.148 {\scriptsize $\pm$ 0.004} \\
Topic-only & 0.127 {\scriptsize $\pm$ 0.003} & 0.163 {\scriptsize $\pm$ 0.004} & 0.185 {\scriptsize $\pm$ 0.004} & 0.114 {\scriptsize $\pm$ 0.004} \\
CoT-only & 0.113 {\scriptsize $\pm$ 0.003} & 0.141 {\scriptsize $\pm$ 0.004} & 0.164 {\scriptsize $\pm$ 0.004} & 0.125 {\scriptsize $\pm$ 0.004} \\
Joint & \textbf{0.112} {\scriptsize $\pm$ 0.003} & \textbf{0.139} {\scriptsize $\pm$ 0.004} & \textbf{0.163} {\scriptsize $\pm$ 0.004} & \textbf{0.109} {\scriptsize $\pm$ 0.004} \\
\midrule
Reduction & 11.3\% & 15.0\% & 12.4\% & 4.5\% \\
\bottomrule
\end{tabular}
\endgroup

\end{table}

Although we have confirmed that CoT-based groups are internally important, significant improvement is not guaranteed. We compare the held-out Brier error of a predictor multicalibrated with respect to CoT and topic (according to our framework) with that of topic-only and marginally calibrated predictors. We find that including CoT yields large improvements (Table~\ref{tab:cot-prediction-performance}).

Joint calibration, i.e., multicalibration w.r.t CoT and topics, achieves the lowest Brier score on all the four datasets. %
In particular, {\bf adding CoT groups reduces Brier score by 4.5--15.0\% relative to topic-only calibration}, while the reductions relative to marginal HB and the uncalibrated scores are even larger.
As we would expect, given topics are still included in the framework, improvements in accuracy do not come with a systematic worsening in topic-level calibration: joint calibration has similar mean and worst-topic calibration error to topic-only calibration, with neither method being uniformly better across benchmarks and both methods still achieving low topic calibration error (Table~\ref{tab:cot-topic-calibration}). Together, these results support using CoT-derived groups alongside topic groups: not only do they give us desirable multicalibration guarantees over CoT properties of interest, but they also contribute to improving probabilistic detection of hallucinations.

\subsection{Topic Calibration Recovery}
\begin{figure}[t]
      \centering
        \includegraphics[width=\textwidth]{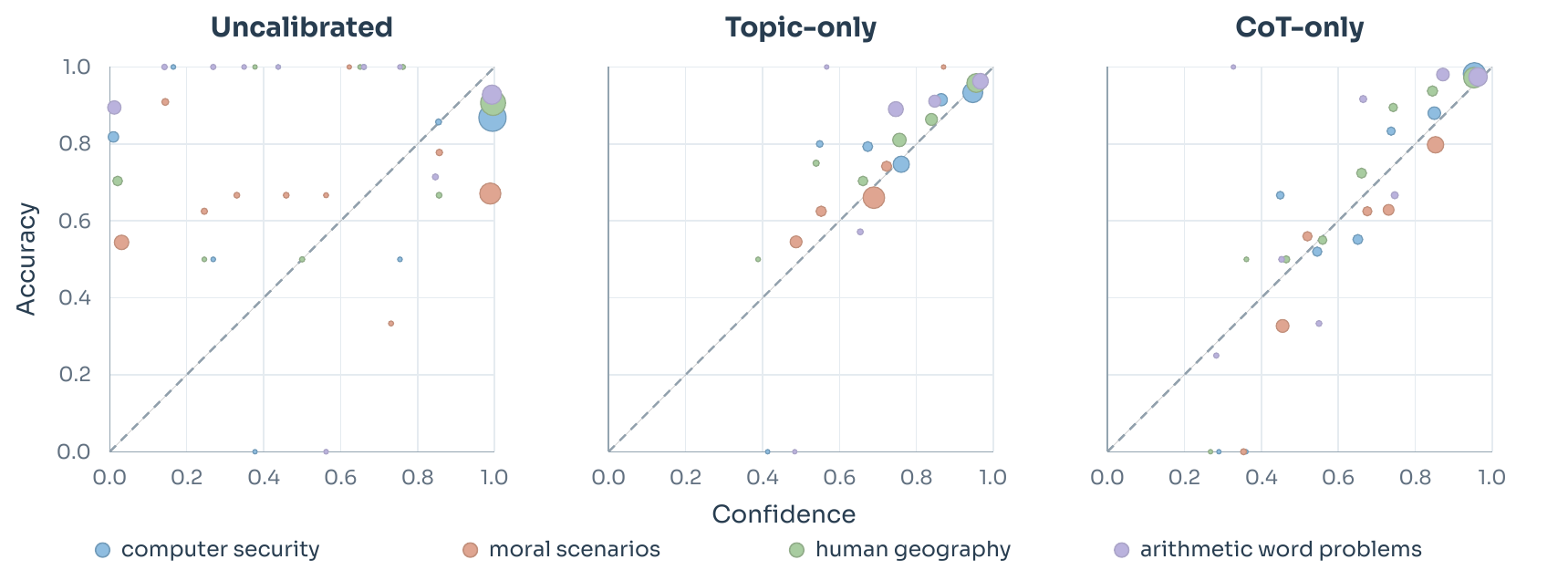}
    \caption{\textbf{CoT-only calibration improves calibration over topics.} The plot shows four MMLU topics. Each point gives mean confidence and accuracy within a topic--confidence bin; size indicates sample count. The diagonal marks agreement between confidence and accuracy. }
      \label{fig:topic-transfer-reliability}
  \end{figure}

Section~\ref{sec:cot-corrections} shows that CoT properties distinguish responses with opposing calibration errors within the same topic. If these properties identify similar errors across topics, correcting them could improve topic-level calibration implicitly, as measured by gASCE, \emph{without directly using topic memberships}. This would be interesting because it would suggest that some reliability properties defined over semantic groups can be addressed through checks on CoT alone. We explore this possibility by calibrating over CoT groups alone and evaluating the resulting predictions within topic groups.

\begin{wrapfigure}[]{R}{0.5\textwidth}
    \vspace{-\intextsep}
    \centering
    \setlength{\abovecaptionskip}{2pt}
    \setlength{\parskip}{0pt}
    \includegraphics[width=\linewidth]{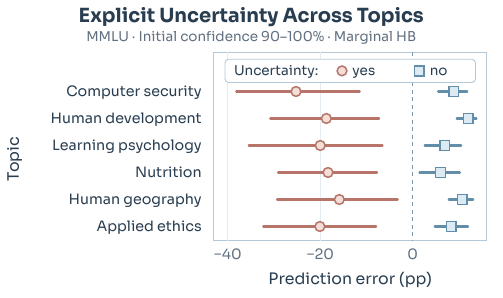}\par
    \nointerlineskip
    \caption{\textbf{CoT cues can reveal errors that marginal calibration misses}. The plot shows six MMLU topics at confidence of 90--100\%. Error is accuracy minus confidence.}
    \label{fig:topic-transfer-uncertainty}
\end{wrapfigure}

On all four datasets \textbf{CoT-only calibration improves mean and worst-topic calibration error over marginal calibration}, and is often comparable to topic only multicalibration (Table~\ref{tab:cot-topic-calibration}). %
We examine the reasons for such transfer in Figure~\ref{fig:topic-transfer-uncertainty}: in all six displayed topics, at initial confidence 90--100\%, HB overpredicts correctness for CoT traces that express uncertainty and underpredicts it for traces that do not (for clarity, we show one diagnostic question and six MMLU topics; Figure~\ref{fig:topic-transfer-cues-all} in the appendix gives all comparisons across datasets, topics, and diagnostic questions). A correction based on this diagnostic label could move predictions in the right direction across several topics without using their memberships. This shared error structure suggests that the predictor's errors depend not only on what the model is asked about, but also on how it reasons over them.

Figure~\ref{fig:topic-transfer-reliability} shows this behavior in the form of a calibration plot. Its three panels show topic-level reliability before calibration, after topic-only calibration, and after CoT-only calibration. Relative to the base predictor, CoT-only calibration moves many of the displayed topic-confidence bins closer to the diagonal (representing perfect multicalibration), despite the corrections being fitted without using semantic information. The topic-only panel provides a direct comparison with corrections fitted using topic memberships. For readability, we display the largest topic in each of the 4 largest families of topics with similar question vocabulary; Figure~\ref{fig:topic-transfer-all-topics} shows all the topic groups.

\section{Discussion and Limitations} \label{sec:discussion}
We investigate whether chain-of-thought (CoT) traces can be used to construct meaningful predictors of LLM response reliability, including hallucination, and safety, while preserving multicalibration across application-relevant groups such as topic.
Motivated by the fact that CoT has signal for these properties, we present and empirically validate a framework of reasoning-aware multicalibration. %
The framework is particularly appealing as it does not introduce any training incentives to the LLM and may harness signals of various kinds. The most significant limitation is one that is shared across all LLM multicalibration regimes: the necessity of \emph{domain-representative calibration data}. In particular, the use of such data poses two questions, (1) how could a practitioner get such a dataset? and (2) given the dataset should be representative of some particular domain, how well should the practitioner expect his calibrated predictor to generalize? We conclude with an elaboration upon these questions.

\xhdr{Where to get calibration data?} We suggest (and use in our case study) batch multicalibration on a domain-representative dataset of prompts, responses, CoTs, and reliability property labels. For various domains or extremely broad ones, getting such a dataset may be difficult. In cases where only labels are missing, a practitioner may consider using a more sophisticated LLM-as-judge to assign reliability labels, though this should be done with caution as an LLM may not be perfectly aligned with human preferences on reliability. In cases where no batched dataset exists, but prompt/response/CoT are generated over time (perhaps as users interact with the LLM) practitioners can consider using online multicalibration methods.

\xhdr{Generalization.} The process of multicalibration is relatively simple, it refines a predictor such that it is more representative of the conditional distribution of the binary property over some distribution. Unfortunately, this can cause performance issues if, for example, a hallucination predictor created for an LLM's responses to trivia questions is used for its discussion of politics. To some extent this is unavoidable, but our framework's use of CoT groups may somewhat decrease this issue when comparing to a multicalibrated predictor with respect to only topic. Notice that while rates of various reliability properties for the same topic across different domains likely change, e.g., safety of chemistry prompts in the domain of trivia v.s. safety of chemistry prompts in the domain of home projects, signals from CoT could be more consistent and thus relatively generalizable across domains. Continuing with the safety example, CoT language cues that suggest the potential for harm, i.e., the LLM theorizing in its CoT that the output may be misused in a harmful way, would represent groups of prompt/responses that are likely unsafe \emph{across domains}. Notably we can already see that our CoT groupings generalize across domains relatively well because across all datasets in \Cref{sec:casestudy}, we use the same CoT diagnostic questions and find they are useful across the board.
To further check the generalization of reasoning-aware predictors, in \Cref{app:cross-benchmark-transfer} we look at predictors' performance on benchmark datasets they were not multicalibrated on. We find that while, of course, each predictor performs best on its own benchmark, predictors that are multicalibrated w.r.t only CoT and even those that are multicalibrated w.r.t. both CoT and topic transfer well (\Cref{tab:cross-benchmark-brier}).

%

%

%

\subsection*{AI use statement}
In this work, we used generative AI tools for retrieval/discovery, specifically Claude and ChatGPT were used in the search for works related to Multicalibration theory, LLM calibration, and CoT. All LLM suggested related work were looked over by an author and the description of each work and its relation to ours was human written with minimal AI assistance, i.e., AI was used for edits and as a question/answer tool about parts of papers that were unclear. Generative AI tools were also used in research execution: specifically, Codex was used to assist with writing, debugging, and running experimental code. All AI-assisted code and resulting output was manually inspected. Lastly, generative AI tools were used to polish the writing of author-written text in some parts of the paper and to improve figures visually (i.e., the basics of figures were human generated and an AI was used to produce a nicer and cleaner version). Any text written by AI in this final product was a result of having it looking over the draft and making edits that we approved of. We have not used generative AI tools to draft the initial sections of the paper or to prove theoretical claims/results. 

\bibliography{bib}
\bibliographystyle{abbrvnat}

\clearpage
\appendix
\section{Additional Related Work}\label{app:extended_related_works}

\xhdr{Extended discussion of optimal groupings.}
Two related works have analyzed groups in multicalibration with a similar focus on good squared error. \citet{burhanpurkar2021scaffoldingsets} construct a small collection of sets such that multicalibration with respect to them yields a predictor close to the Bayes optimum. \citet{mcal_boosting} give a complementary characterization in terms of weak learning: multicalibration with respect to a hypothesis class implies Bayes optimality exactly when said hypothesis class satisfies an appropriate weak-learning condition. Though conceptually similar, our framing differs in two key ways: we focus on (1) (marginal) group augmentation, i.e., having an initial $\mathcal{G}$ a priori and adding groups and (2) application-motivated groups that will not necessarily satisfy weak learning guarantees.

\xhdr{Details on LLM calibration works.}
A large amount of work has been dedicated to the problem of ensuring confidence scores outputted by LLMs are well-calibrated (we refer to \citet{llm-calibration-survey} for a survey). 
Evaluating question/answer confidence using a confidence metric that is a function of token probabilities, \citet{kadavath2022languagemodelsmostlyknow} find that large models they test (800M to 52B parameters) are generally well-calibrated in their token probabilities on a suite of multiple-choice benchmarks. However this kind of calibration is not true of all models, testing T5, BART, and GPT-2, \citet{know_what_language_models_know} find these models' probabilities are \emph{not} well-calibrated on question-answering (QA) tasks, though temperature-based scaling and feature-based decision tree calibration methods are helpful. Using the intuition that models achieve more calibrated confidences when they have distinct choices from \cite{kadavath2022languagemodelsmostlyknow}, \citet{kapoor-etal-2024-calibration} propose a method to directly calibrate logits, which does not require a set of candidate answers like the methods used by \citet{know_what_language_models_know}. Other works study LLM calibration focusing on confidence metrics that are not token probability-based: \citet{lin2022tmlr-teaching} calibrate (via finetuning) GPT-3's \emph{verbal uncertainty} on a novel mathematics benchmark, \cite{zhang-etal-2024-calibrating} propose a new \emph{sampling-based} confidence measure that does not require access to tokens, and similarly \cite{lyu_calibrating_2025} also evaluate three \emph{sampling consistency}-based confidence metrics. Yet another relevant strand of works examine how calibration is affected by various aspects of LLM creation and use: \citet{zhu-etal-2023-calibration} examine the changes in LLM calibration \emph{during the training cycle}, \citet{just_ask_calibration} analyze model calibration, specifically verbalized probabilities \emph{after RLHF} and \cite{damani2026binaryrewardstraininglms} train for that, and \citet{zhao-etal-2024-fact} evaluate calibration (across a variety of confidence metrics) when \emph{different prompting techniques} are employed.

\xhdr{Chain of thought.}
Chain of thought reasoning is the generation of intermediate steps an LLM takes before producing a final output, popularized initially as a prompting technique by \citet{cot_prompting}. Today, many models engage in, and present alongside the final response, a similar reasoning~\cite{openai_system_card,anthropic_system_card,deepseek_reasoning}. 
CoT has been observed to contain signal about general reliability properties of the output: \citet{wang2026joint} detect hallucinations using answer uncertainty and consistency across sampled reasoning traces; \citet{arnav2026cot} find that reading CoT helps a monitor detect harmful changes to code (but can also mislead it when the trace presents those changes as harmless); \citet{baker2025monitoring} show that access to CoT improves reward-hacking detection compared with observing actions and outputs alone. We leverage this signal in the creation of our multicalibration framework.  At the same time, many works have shown that reasoning traces can be incomplete or misleading. 
\citet{dont_say_what_they_think} show that biasing cues can influence models' answers without being acknowledged in their explanations; \citet{chen2025reasoning} find similar omissions in models trained for extended reasoning. Other studies examine whether models use their stated reasoning: \citet{paul-etal-2024-making} find that changing intermediate reasoning does not consistently change the answer, while \citet{arcuschin2025chain} observe contradictory justifications and correct answers supported by invalid reasoning without injected hints/traces. There are also limits to what can be inferred from the reasoning text: \citet{du2026legibility} find that LLM judges only partly infer a reasoning step's effect on correctness from its text, while \citet{baherwani2026llmreasoningvisiblechainofthought} show that even semantically irrelevant filler tokens can improve performance on synthetic reasoning tasks. Training can further encourage models to conceal information \citep{baker2025monitoring, ICLR2026_53813b70}.
\section{Additional Theoretical Details}\label{app:theory}

\subsection{Additional Details}
Unfortunately, adding internally important groups does not guarantee the desired \emph{global} Brier improvement. We provide an example next. 

\begin{example}[Predicted value collapse]\label{ex:collapse}
    Using the setting from \Cref{ex:tree}, consider an alternative group set, $\mathcal{G}':= \{g_1, g_2\}$ where $g_2(x) = 1$ $\forall x \in \mathcal{X}$. Clearly, $\mathcal{G}\subset \mathcal{G}'$. Using the IGHB version of \cref{alg:iterative_greedy} with $\alpha = .2$, \Cref{fig:bad_path} presents all the possible patches. The addition of $g_2$ expands upon the patching options and provides a branch in which $f_T$ may become the \emph{optimal} predictor (i.e., $f_T= (1,0,.6)$). However, because \cref{alg:iterative_greedy} takes the greedy step at each round, the middle path will be taken and this $f_T$ will be \emph{worse} than that of \Cref{ex:tree}.

\end{example}

\begin{figure}[htbp]
    \centering
    \includegraphics[width=0.8\linewidth]{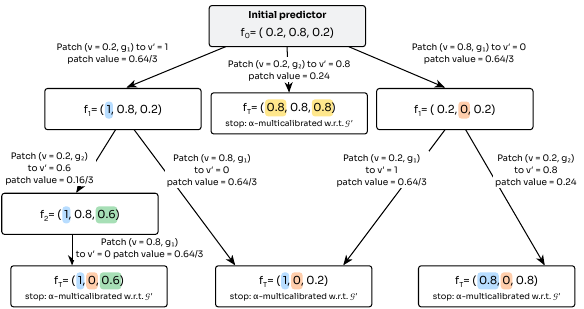}
    \caption{All potential patches (using IGHB patch value) for \Cref{ex:collapse}.}
    \label{fig:bad_path}
\end{figure}

\subsection{Proofs}
\begin{proof}[Proof of \Cref{prop:nec_cond}]
    Let $\mathcal{P}$ be the patching path from $\alpha$-multicalibration using an iterative greedy multicalibration algorithm on $f$ with respect to $\mathcal{G}$ and $\mathcal{P}'$ be the patching path w.r.t $\mathcal{G}\cup \mathcal{H}$. Assume $\mathcal{H}$ satisfies \Cref{def:added_groupings_goal}. By definition of an iterative greedy algorithm, every patch in $\mathcal{P}$ is associated with a strict decrease in Brier error. Therefore, because $B(f_{T})> B(f_{T'})$, it must be the case that $\mathcal{P}\neq \mathcal{P}'$. At least one of the following must be true: (A) $\exists j \in \left[|\mathcal{P}|\right]$ s.t. $\mathcal{P}'_{j}\neq \mathcal{P}_j$, (B) $\forall j \in \left[|\mathcal{P}|\right]$, $\mathcal{P}'_{j}= \mathcal{P}_j$ and $|\mathcal{P}| < |\mathcal{P}'|$ where $\mathcal{P}_{j}$ and $\mathcal{P}'_{j}$ refer to the $j$th patches of $\mathcal{P}$ and $\mathcal{P}'$ respectively and $\mathcal{P}'_{j} := \{\}$ if $j> |\mathcal{P}|$. We show that if at least (A) or (B) hold, then at least one of the conditions of \Cref{def:improved_internal_groupings} holds.

    Suppose (A) holds. Let $j^\star=\min\{j \in [|\mathcal{P}|]: \mathcal{P}_j'\neq \mathcal{P}_j\}$. By definition of $j^\star, \mathcal{P}$, and $\mathcal{P}'$, before patch $j^\star$, all $\mathcal{G}$ patches available in the multicalibration with respect to $\mathcal{G}$, must also be available in the multicalibration with respect to $\mathcal{G}\cup\mathcal{H}$. However, for patch $j^\star$, the algorithm w.r.t $\mathcal{G}\cup\mathcal{H}$ proceeds with a different patch. Therefore, by definition of \Cref{alg:iterative_greedy}, it must be the case that there was a $\mathcal{H}$ patch whose patch value was (strictly) larger than any available $\mathcal{G}$ patch. Equivalently, because the patch value of the $\mathcal{P}_{j^\star}$ must be the largest available $\mathcal{G}$ patch, this gives condition 1 of \Cref{def:improved_internal_groupings}.

    Suppose (B) holds. Let $j^\star = |\mathcal{P}|$. Because $j^\star$ represents the last patch for a greedy iterative algorithm w.r.t. $\mathcal{G}$ and $\mathcal{P}_j = \mathcal{P}'_j$ for all indices up to and including $j^\star$, by definition of \Cref{alg:iterative_greedy}, after patch $j^\star$ (w.r.t $\mathcal{G}$ OR $\mathcal{G}\cup\mathcal{H}$), the function must be $\alpha$-multicalibrated w.r.t $\mathcal{G}$. However, $|\mathcal{P}| < |\mathcal{P}'|$, meaning another step when multicalibrated w.r.t. $\mathcal{G}\cup\mathcal{H}$ was taken. Therefore, it must be the case that condition 2 of \cref{def:improved_internal_groupings} holds.
\end{proof}
\section{Additional Methodological Details}\label{app:method}

\subsection{Datasets and Response Generation}

We use four question-answering benchmarks: MMLU and MMLU-Pro for multiple-choice questions, and TriviaQA and PopQA for short-answer questions. We use the full test splits of MMLU, MMLU-Pro, and PopQA. For TriviaQA, we use the validation split, retaining one example per unique question identifier. Questions are presented without supporting documents or retrieval.

We generate one response per question using Qwen3-8B with thinking enabled. For MMLU, we use four or five subject-specific reasoning few-shot examples from Chain-of-Thought Hub\footnote{\url{https://github.com/FranxYao/chain-of-thought-hub}}. For MMLU-Pro, we use five category-specific demonstrations from the dataset's official validation split. Both prompts ask the model to reason step by step and end with ``The answer is (X),'' where X identifies the selected option. TriviaQA and PopQA use zero-shot prompts that request step-by-step reasoning. Answers are generated with sampling temperature 0.

The binary target indicates whether the extracted answer is correct. For multiple-choice questions, we compare the extracted option with the reference option. For both short-answer datasets, the extracted answer must exactly match a reference answer or an accepted alias after normalization. Accuracy results for this setup are reported in Table~\ref{tab:answer-accuracy}. 

\begin{table}[htbp]
    \centering
    \caption{Qwen3-8B answer accuracy and number of held-out evaluation examples for each dataset.}
    \label{tab:answer-accuracy}
    \begin{tabular}{lrr}
        \toprule
        Dataset & Evaluation examples & Answer accuracy \\
        \midrule
        MMLU & 4,126 & 83.28\% \\
        MMLU-Pro & 3,438 & 75.36\% \\
        TriviaQA & 2,938 & 61.98\% \\
        PopQA & 4,107 & 21.09\% \\
        \bottomrule
    \end{tabular}
\end{table}

\subsection{Scoring Function}
\label{app:scoring}

For the initial scoring function, we use the True/False softmax scoring approach of \citep{multical_conf}. In a separate call, we provide Qwen3-8B with the original question and its extracted answer, including the answer choices for multiple-choice questions. The prompt asks: ``Is the proposed answer correct? Answer with exactly one word, True or False.''
The initial reliability score is then computed as 
\[
    f_0(x)=\frac{p_{\mathrm{True}}(x)}{p_{\mathrm{True}}(x)+p_{\mathrm{False}}(x)},
\]
where $x$ denotes the example and $p_{\mathrm{True}}(x)$ and $p_{\mathrm{False}}(x)$ are the next-token probability masses assigned to True and False under the scoring prompt. We compute these masses using the 20 highest-probability tokens returned by the model. Tokens outside this set contribute zero; if neither answer is represented, the score is treated as missing. 

\subsection{Topic Groups}
\label{app:topic}

Throughout the paper, ``topic groups'' refers to groups constructed from semantic embeddings of questions and their answer choices. This provides a general procedure that applies across datasets without requiring predefined subject labels. 

We construct the groups separately for each dataset. We encode each question and its answer choices, when present, with UAE-Large-V1 \cite{li-li-2024-aoe}, using a maximum input length of 512 tokens, and normalize the embeddings to unit length. We reduce the embeddings to 20 dimensions using principal component analysis (PCA) and fit a Gaussian mixture model with diagonal covariance matrices. We select the number of components between 2 and 60 by minimizing the Bayesian information criterion on the group-construction sample. Each component defines a topic group: an example belongs to it when its posterior probability for that component is at least the mean posterior on the group-construction sample. An example can therefore belong to multiple groups. 

We fit the dimensionality reduction, mixture, and membership thresholds on a separate sample using only question content, then hold them fixed for calibration and evaluation. This yields 39 groups for MMLU, 42 for MMLU-Pro, 27 for TriviaQA, and 37 for PopQA. For topic-level evaluation, we retain groups containing at least 1\% of the calibration sample: all groups for the first three datasets and 34 for PopQA.

\begin{figure*}[htbp]
    \centering
    \includegraphics[width=\textwidth]{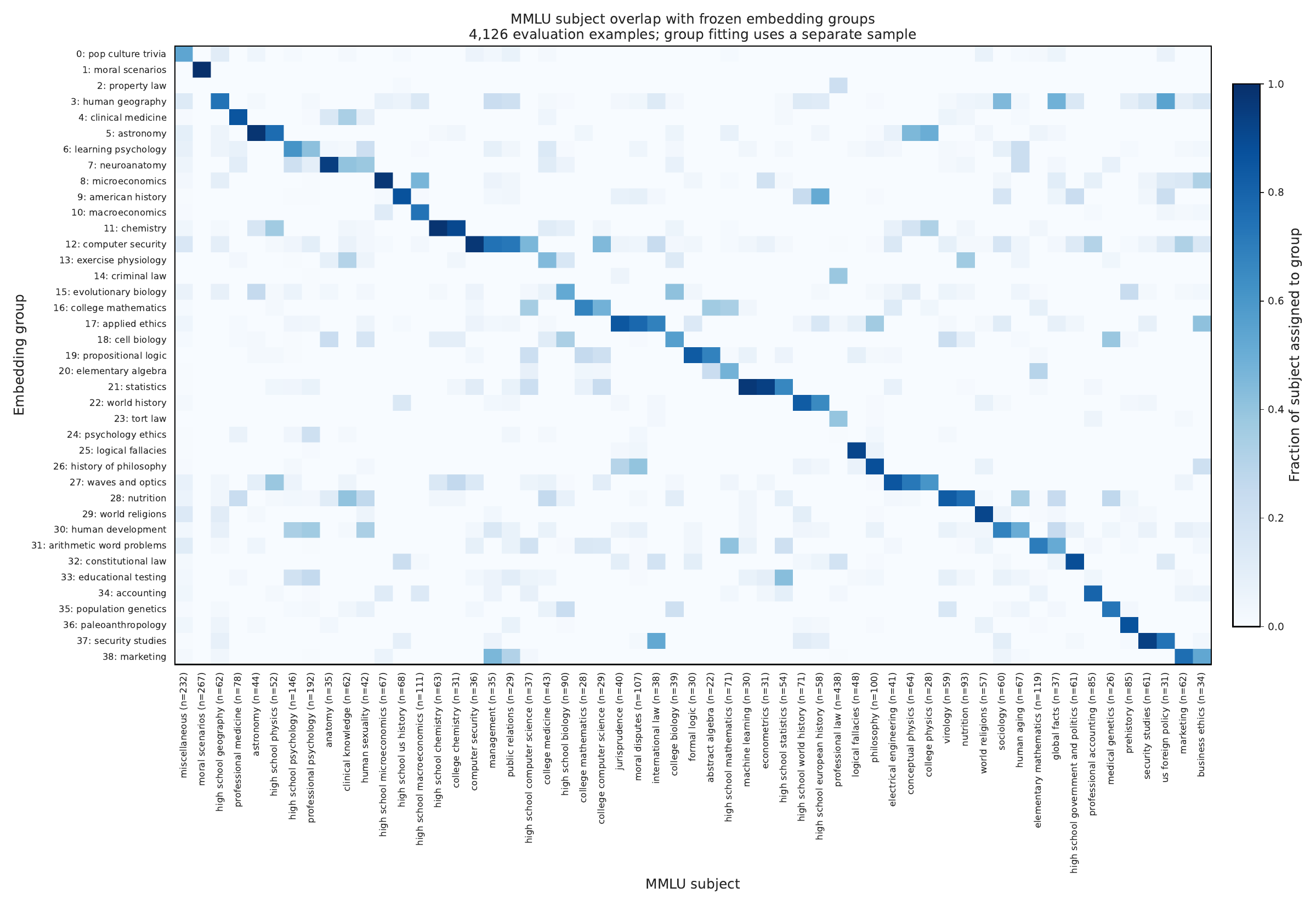}
    \caption{Overlap between embedding-based groups and MMLU subjects. Each cell shows the fraction of examples in a subject assigned to an embedding group. Subjects are ordered by their largest group overlap. Memberships can overlap, so a column need not sum to one. Embedding clusters are named using posterior-weighted c-TF-IDF
keywords and manual inspection of representative questions.
Names do not affect group membership or calibration.}
    \label{fig:topic-subject-overlap}
\end{figure*}

On MMLU, which provides subject labels, we examine how the resulting groups align with the given categories for validation in an evaluation set. We find that the groups generally capture subject structure while combining related categories. For example, the chemistry group contains 62 of 63 high-school chemistry questions and 28 of 31 college chemistry questions. Memberships overlap: 37.7\% of examples belong to multiple groups, averaging 1.52 memberships per example. The highest-probability cluster assignments have adjusted mutual information of 0.614 with the subject labels. Across binary groups, mean mutual information with subjects is 0.1267 bits, compared with 0.0109 under 999 subject-label permutations ($p=0.001$). We name embedding clusters using class-based c-TF-IDF \citep{grootendorst2022bertopicneuraltopicmodeling}, which scores terms by their prominence within a cluster relative to other clusters. We weight each question's contribution by its posterior cluster probability and inspect representative questions manually. These names are used only for visualization and do not affect group membership or calibration (Figure~\ref{fig:topic-subject-overlap}).

\subsection{CoT Groups}
\label{app:cot-groupings}

Our main experiments construct CoT groups using the ten diagnostic questions in Table~\ref{tab:cot_questions}. These questions describe visible properties of the reasoning (e.g., expressed uncertainty, consideration of competing answers, self-correction, and verification). We use the same questions across all four datasets. Each question defines one binary group containing the traces for which the labeler answers ``Yes'' or ``No''. The resulting ten groups overlap, as a reasoning trace can exhibit several properties.

We use Gemma4-31B as the diagnostic labeler. For each example, we pass the model the reasoning trace alone and the 10 diagnostic questions; the original question, separately extracted final answer, reference answer, correctness label, and initial reliability score are not part of the input. Labeling uses temperature 0, and group membership is the labeler's binary judgment. Appendix~\ref{app:ablations} repeats the diagnostic labeling with Qwen3-8B and Qwen3-4B for robustness.

We stress that the diagnostic labels describe the observed text, but do not establish that a trace faithfully reports the model's internal reasoning, or that a stated check or correction is valid. Their value depends on whether these observable properties distinguish systematic errors in the initial reliability score. 

\begin{table}[htbp]
  \centering
  \small
  \renewcommand{\arraystretch}{1.12}
  \begin{tabular}{@{}p{0.24\linewidth}p{\dimexpr0.66\linewidth-4\tabcolsep\relax}r@{}}
    \toprule
    \raggedright CoT property & Diagnostic question & Yes \\
    \midrule
    \raggedright Explicit uncertainty & Does the CoT explicitly express uncertainty about its final answer or about a fact needed to choose the answer? & 7,723 \\
    \addlinespace[4pt]
    \raggedright Explicit certainty & Does the CoT explicitly state high confidence or certainty about its final answer or a key premise? Do not count merely calling an option ``correct.'' & 420 \\
    \addlinespace[4pt]
    \raggedright Knowledge gap or guess & Does the CoT say that it does not know, only weakly remembers, assumes, or guesses a fact needed to answer? & 5,195 \\
    \addlinespace[4pt]
    \raggedright Multiple plausible answers & Does the CoT seriously treat at least two distinct answers or interpretations as plausible, giving some reason for each? Do not count routine enumeration of options. & 6,566 \\
    \addlinespace[4pt]
    \raggedright Reasoned elimination & Does the CoT eliminate at least one candidate answer using an explicit reason? & 7,479 \\
    \addlinespace[4pt]
    \raggedright Self-correction & Does the CoT retract or materially revise an earlier claim, calculation, inference, or tentative answer? Do not count discourse markers such as ``wait'' without a revision. & 3,940 \\
    \addlinespace[4pt]
    \raggedright Independent check & Does the CoT check a key step through recomputation, substitution, a second method, a counterexample, or comparison with stated evidence? Do not count merely restating the claim. & 2,414 \\
    \addlinespace[4pt]
    \raggedright Problem with the question & Does the CoT claim that the question, context, or answer choices are ambiguous, malformed, incomplete, or inconsistent? & 1,183 \\
    \addlinespace[4pt]
    \raggedright Uncertainty with a decisive answer & Does the CoT end with an unqualified answer while leaving its own stated uncertainty or competing candidates unresolved? & 6,857 \\
    \addlinespace[4pt]
    \raggedright Unresolved internal conflict & Does the CoT identify a contradiction or conflict among its own claims, calculations, or candidate conclusions and fail to resolve it before answering? & 402 \\
    \bottomrule
  \end{tabular}
  \caption{Diagnostic questions used to construct reasoning-aware groups. Each question is answered Yes or No based on the observed CoT. Yes counts are pooled over the 14,609 evaluation traces across MMLU, MMLU-Pro, TriviaQA, and PopQA; groups overlap. The ``Yes'' column reports the distribution of positive answers across the questions.}
  \label{tab:cot_questions}
\end{table}

\xhdr{Extensions.}
Diagnostic questions are one of many ways to construct CoT groups within our framework. We chose them because they describe reasoning properties that apply across question formats, and are easily generalizable to different setups. Other constructions can use any observable feature of the trace to define group membership.

As a simple alternative, we use four text features requiring no diagnostic-labeling calls: the length of the CoT trace, and the densities of hedging, contrast, and tentative-commitment expressions. Densities count case-insensitive phrase matches per word, such as ``perhaps,'' ``however,'' and ``I think,'' respectively. These features are extracted via regex. For each of them, we choose a threshold among midpoints of consecutive distinct values to minimize within-group squared error of correctness labels on the separate group-construction sample. Values at or below the threshold and those above it define two complementary groups, yielding eight groups per dataset, over which we calibrated. Results are shown in Table~\ref{tab:cot-grouping-comparison}.

\begin{table}[htbp]
    \centering
    \begingroup\small
\setlength{\tabcolsep}{4pt}
\renewcommand{\arraystretch}{1.12}
\begin{tabular}{@{}lrrrr@{}}
\toprule
Condition & MMLU & MMLU-Pro & TriviaQA & PopQA \\
\midrule
\multicolumn{5}{l}{\textbf{Brier score} $\downarrow$} \\
Uncalibrated & 0.233 {\scriptsize $\pm$ 0.006} & 0.359 {\scriptsize $\pm$ 0.007} & 0.294 {\scriptsize $\pm$ 0.008} & 0.382 {\scriptsize $\pm$ 0.008} \\
Marginal HB & 0.137 {\scriptsize $\pm$ 0.004} & 0.184 {\scriptsize $\pm$ 0.004} & 0.209 {\scriptsize $\pm$ 0.004} & 0.148 {\scriptsize $\pm$ 0.004} \\
Topic-only & 0.127 {\scriptsize $\pm$ 0.003} & 0.163 {\scriptsize $\pm$ 0.004} & 0.185 {\scriptsize $\pm$ 0.004} & 0.114 {\scriptsize $\pm$ 0.004} \\
CoT-only (diagnostics) & 0.113 {\scriptsize $\pm$ 0.003} & 0.141 {\scriptsize $\pm$ 0.004} & 0.164 {\scriptsize $\pm$ 0.004} & 0.125 {\scriptsize $\pm$ 0.004} \\
CoT-only (text features) & 0.118 {\scriptsize $\pm$ 0.003} & 0.157 {\scriptsize $\pm$ 0.003} & 0.175 {\scriptsize $\pm$ 0.004} & 0.132 {\scriptsize $\pm$ 0.004} \\
Joint (diagnostics) & \textbf{0.112} {\scriptsize $\pm$ 0.003} & \textbf{0.139} {\scriptsize $\pm$ 0.003} & \textbf{0.163} {\scriptsize $\pm$ 0.004} & \textbf{0.109} {\scriptsize $\pm$ 0.004} \\
Joint (text features) & 0.118 {\scriptsize $\pm$ 0.003} & 0.148 {\scriptsize $\pm$ 0.003} & 0.171 {\scriptsize $\pm$ 0.004} & 0.113 {\scriptsize $\pm$ 0.004} \\
\bottomrule
\end{tabular}\endgroup

    \caption{Brier scores for diagnostic and text-feature CoT groups using IGHB, with shared scores, examples, splits, and topic groups.}
    \label{tab:cot-grouping-comparison}
\end{table}

Further alternatives may include clustering embeddings of CoTs, or defining groups around task-specific reasoning steps. While these constructions can be incorporated into the same calibration procedure, we do not evaluate them here.

\section{Additional Experimental Details}\label{app:experiments}

Table~\ref{tab:cot-topic-calibration} reports mean and worst-topic calibration error for the predictors compared in \Cref{sec:casestudy}.

Figure~\ref{fig:cot-transfer-all-groups} complements the topic-level audit in Figure~\ref{fig:topic-transfer-all-topics} by evaluating the same predictors within CoT groups. Figure~\ref{fig:heatmap} reports errors within CoT subgroups across all 39 MMLU topic groups after topic-only calibration.


\begin{figure*}[htbp]
    \centering
    \includegraphics[width=\textwidth]{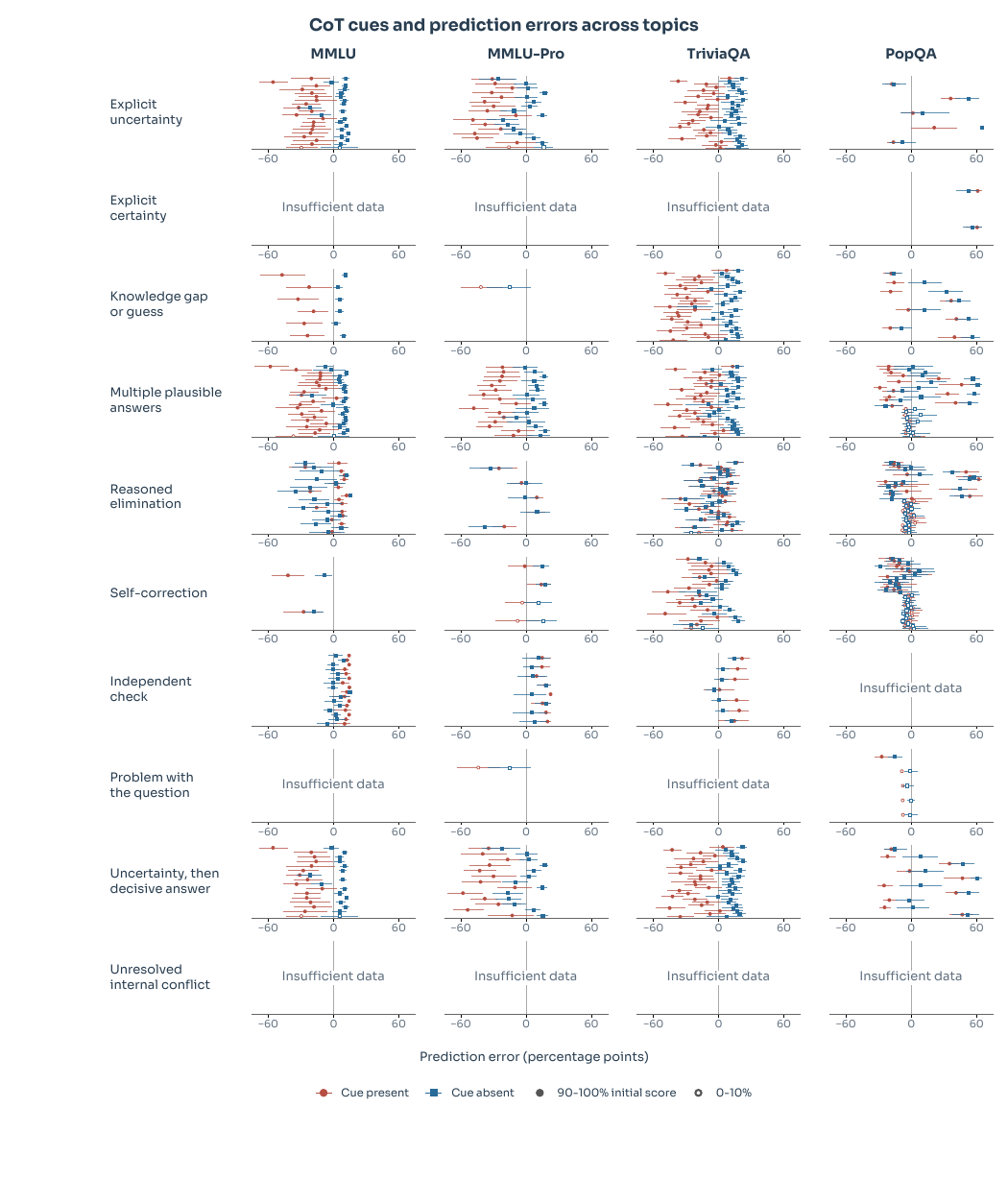}
    \caption{CoT-associated errors after marginal histogram binning. Each panel identifies a diagnostic, and each horizontal row within a panel is one topic and initial-score bin. Prediction error is accuracy minus mean predicted confidence, in percentage points. Each subgroup must have at least 20 fitting and 20 evaluation examples, otherwise we mark the panel as ``insufficient data''. }
    \label{fig:topic-transfer-cues-all}
\end{figure*}

\begin{table*}[htbp]
    \centering
\begingroup
\small
\setlength{\tabcolsep}{5pt}
\begin{tabular}{@{}lrrrr@{}}
\toprule
Condition & MMLU & MMLU-Pro & TriviaQA & PopQA \\
\midrule
\multicolumn{5}{l}{\textbf{Mean topic gASCE} $\downarrow$} \\
Uncalibrated & 0.106 {\scriptsize $\pm$ 0.005} & 0.216 {\scriptsize $\pm$ 0.008} & 0.115 {\scriptsize $\pm$ 0.007} & 0.274 {\scriptsize $\pm$ 0.009} \\
Marginal HB & 0.011 {\scriptsize $\pm$ 0.001} & 0.027 {\scriptsize $\pm$ 0.002} & 0.022 {\scriptsize $\pm$ 0.002} & 0.041 {\scriptsize $\pm$ 0.003} \\
Topic-only & \textbf{0.008} {\scriptsize $\pm$ 0.001} & 0.015 {\scriptsize $\pm$ 0.002} & 0.014 {\scriptsize $\pm$ 0.002} & \textbf{0.009} {\scriptsize $\pm$ 0.002} \\
CoT-only & 0.010 {\scriptsize $\pm$ 0.001} & 0.018 {\scriptsize $\pm$ 0.002} & 0.017 {\scriptsize $\pm$ 0.002} & 0.032 {\scriptsize $\pm$ 0.002} \\
Joint & 0.010 {\scriptsize $\pm$ 0.001} & \textbf{0.014} {\scriptsize $\pm$ 0.002} & \textbf{0.014} {\scriptsize $\pm$ 0.002} & 0.012 {\scriptsize $\pm$ 0.001} \\
\midrule
\multicolumn{5}{l}{\textbf{Worst-topic gASCE} $\downarrow$} \\
Uncalibrated & 0.382 {\scriptsize $\pm$ 0.053} & 0.534 {\scriptsize $\pm$ 0.047} & 0.328 {\scriptsize $\pm$ 0.054} & 0.748 {\scriptsize $\pm$ 0.050} \\
Marginal HB & 0.061 {\scriptsize $\pm$ 0.020} & 0.133 {\scriptsize $\pm$ 0.027} & 0.084 {\scriptsize $\pm$ 0.024} & 0.436 {\scriptsize $\pm$ 0.050} \\
Topic-only & 0.031 {\scriptsize $\pm$ 0.017} & 0.066 {\scriptsize $\pm$ 0.021} & \textbf{0.046} {\scriptsize $\pm$ 0.018} & \textbf{0.090} {\scriptsize $\pm$ 0.031} \\
CoT-only & \textbf{0.029} {\scriptsize $\pm$ 0.013} & 0.092 {\scriptsize $\pm$ 0.022} & 0.057 {\scriptsize $\pm$ 0.019} & 0.192 {\scriptsize $\pm$ 0.042} \\
Joint & 0.030 {\scriptsize $\pm$ 0.015} & \textbf{0.058} {\scriptsize $\pm$ 0.017} & 0.066 {\scriptsize $\pm$ 0.018} & 0.096 {\scriptsize $\pm$ 0.030} \\
\bottomrule
\end{tabular}
\endgroup

    \caption{Mean and worst-topic calibration error (gASCE) on held-out examples across four datasets. Lower values indicate better topic-level calibration.}
    \label{tab:cot-topic-calibration}
    
\end{table*}

\begin{figure*}[htbp]
    \centering
    \includegraphics[width=\textwidth]{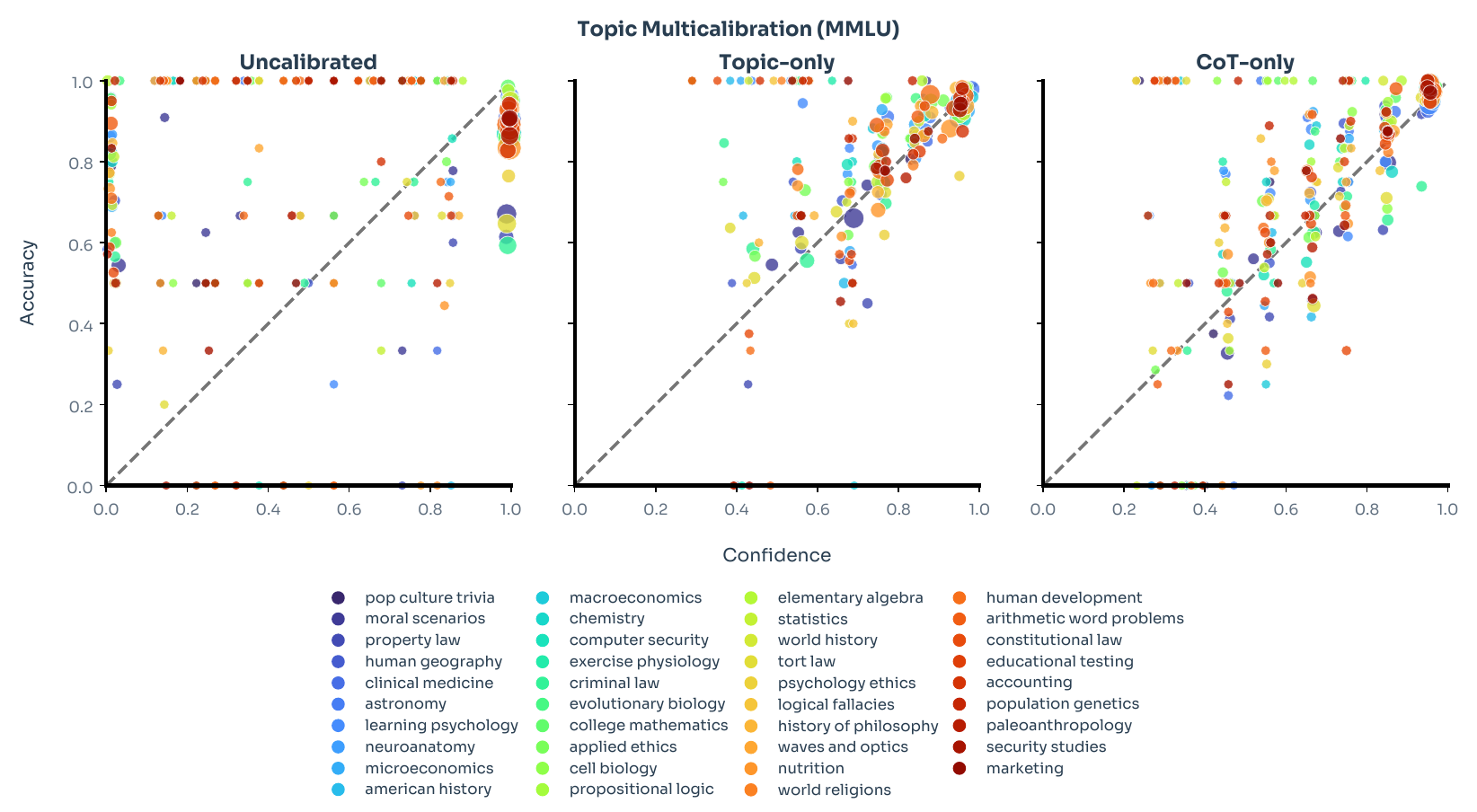}
    \caption{Topic-level reliability for all 39 MMLU topics, before calibration
    and after topic-only or CoT-only IGHB. Each point shows mean confidence and
    accuracy in a nonempty topic--confidence bin, using ten equal-width bins.
    Color identifies the topic and point size indicates sample count.
    The predictors and evaluation examples are the same as in
    Figure~\ref{fig:topic-transfer-reliability}.}
    \label{fig:topic-transfer-all-topics}
\end{figure*}

\begin{figure*}[htbp]
    \centering
    \includegraphics[width=\textwidth]{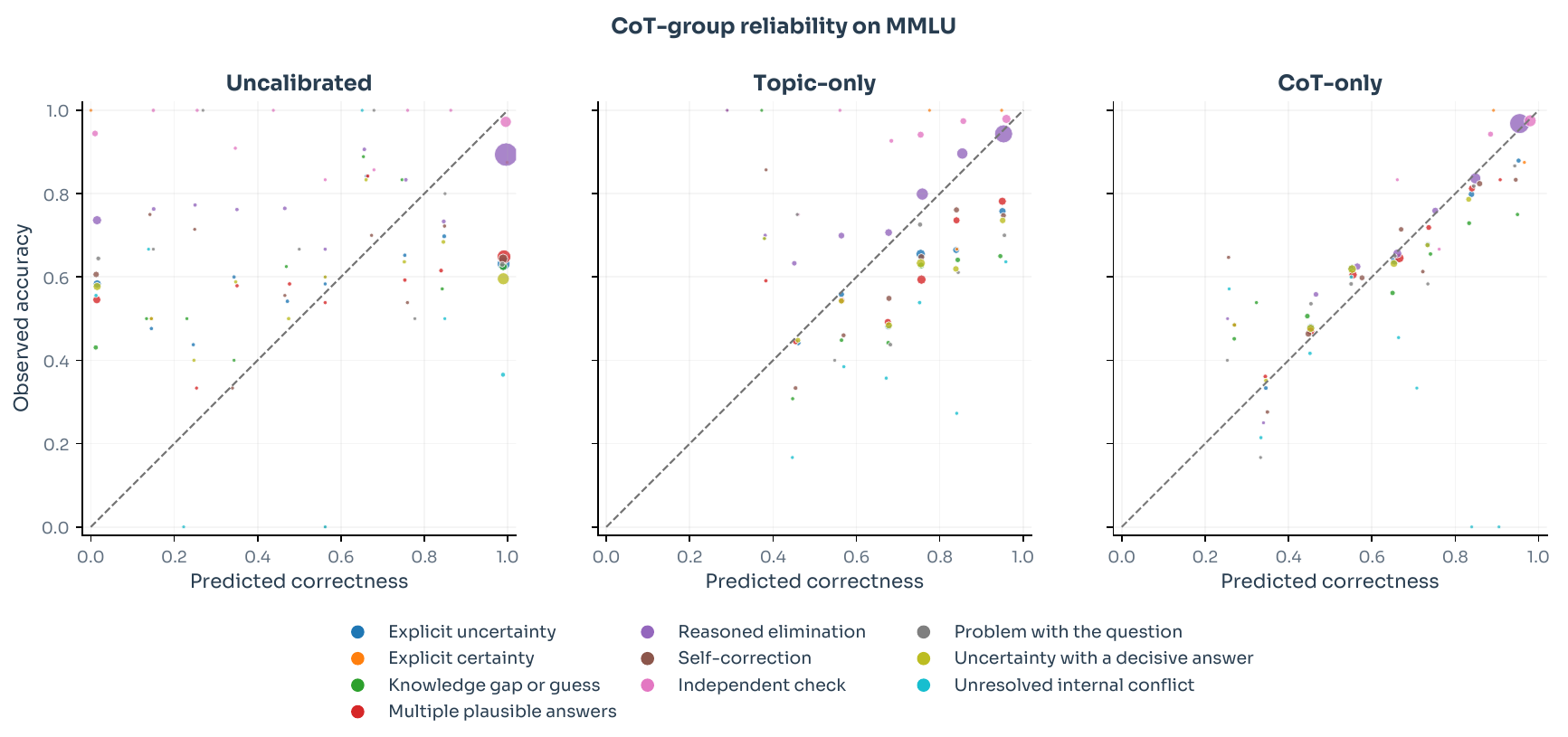}
    \caption{CoT-group reliability on held-out MMLU responses, before calibration
    and after topic-only or CoT-only IGHB. This reverses the audit in
    Figure~\ref{fig:topic-transfer-all-topics}: predictions are evaluated within
    CoT groups rather than topic groups. Each color denotes the Yes-group of
    one diagnostic question.}
    \label{fig:cot-transfer-all-groups}
\end{figure*}

\begin{figure*}[htbp]
    \centering
    \includegraphics[width=0.9\textwidth]{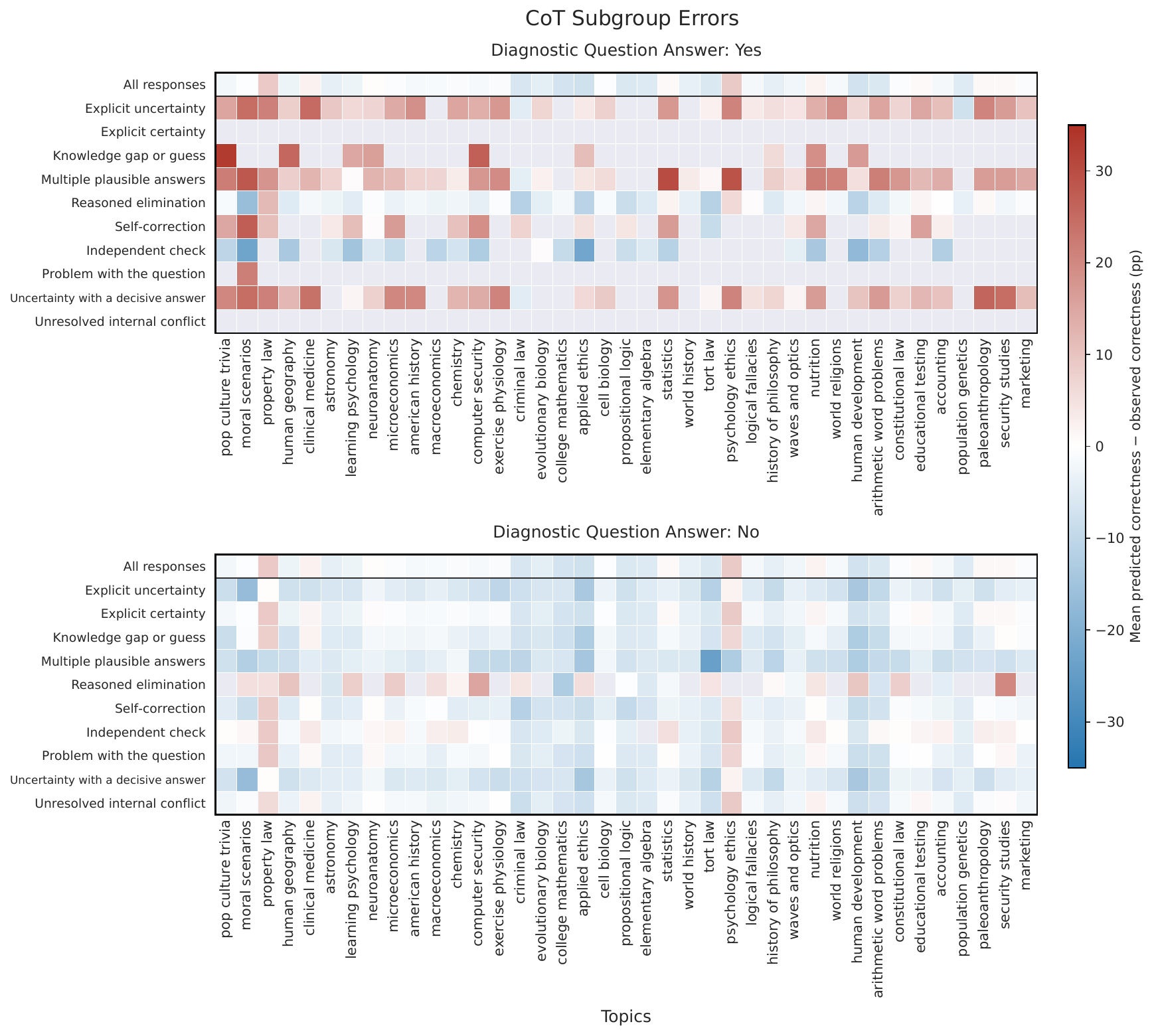}
    \caption{CoT subgroup errors after topic-only IGHB calibration on held-out MMLU responses from Qwen3-8B. Columns show all 39 embedding-derived topic groups, using the same names as Figure~\ref{fig:topic-subject-overlap}. Rows show diagnostic questions, with Yes answers in the top panel and No answers in the bottom panel. The pooled topic error is repeated in the first row. Colors show mean predicted correctness minus observed correctness, in percentage points (pp); positive values indicate overprediction. Gray cells contain fewer than 20 responses.}
    \label{fig:heatmap}
\end{figure*}


\subsection{Model and Algorithm Ablations}
\label{app:ablations}

\xhdr{Diagnostic labeler.}
We repeat the case study of \Cref{sec:casestudy} with three models answering the ten diagnostic questions in \Cref{tab:cot_questions}, using the labeling protocol in \Cref{app:cot-groupings}. In the main body, we use Gemma4-31B for two main reasons: (i) we expected a larger model to label these properties more accurately, and (ii) it belongs to a different model family from Qwen3-8B, which generated the traces. Here we test whether Qwen3-8B can label its own traces and whether the smaller Qwen3-4B can supply useful labels. For each labeler, we refit CoT-only and joint IGHB using the same traces, initial scores, correctness labels, data splits, topic groups, and validation procedure. All conditions are evaluated on the same held-out questions.
Joint IGHB lowers held-out Brier score relative to topic-only IGHB on all four datasets with each labeler (\Cref{tab:diagnostic-labeler}). Qwen3-4B gives smaller gains than Gemma4-31B, and its CoT-only fits do not consistently beat topic-only. Adding its groups to the topic groups still improves held-out Brier score on every dataset. 

\begin{table*}[htbp]
    \centering    \begingroup
\small
\setlength{\tabcolsep}{3pt}
\renewcommand{\arraystretch}{1.15}
\begin{tabular}{@{}llrrrr@{}}
\toprule
Calibration & Labeler & MMLU & MMLU-Pro & TriviaQA & PopQA \\
\midrule
\addlinespace[3pt]
\multicolumn{6}{l}{\textbf{Brier score} $\downarrow$} \\
\addlinespace[3pt]
Topic-only & --- & 0.127 {\scriptsize $\pm$ 0.003} & 0.163 {\scriptsize $\pm$ 0.004} & 0.185 {\scriptsize $\pm$ 0.004} & 0.114 {\scriptsize $\pm$ 0.004} \\
CoT-only & Qwen3-4B & 0.127 {\scriptsize $\pm$ 0.003} & 0.163 {\scriptsize $\pm$ 0.004} & 0.183 {\scriptsize $\pm$ 0.004} & 0.129 {\scriptsize $\pm$ 0.004} \\
CoT-only & Qwen3-8B & 0.123 {\scriptsize $\pm$ 0.003} & 0.163 {\scriptsize $\pm$ 0.004} & 0.174 {\scriptsize $\pm$ 0.004} & 0.126 {\scriptsize $\pm$ 0.004} \\
CoT-only & Gemma4-31B & 0.113 {\scriptsize $\pm$ 0.003} & 0.141 {\scriptsize $\pm$ 0.004} & 0.164 {\scriptsize $\pm$ 0.004} & 0.125 {\scriptsize $\pm$ 0.004} \\
Joint & Qwen3-4B & 0.124 {\scriptsize $\pm$ 0.003} & 0.155 {\scriptsize $\pm$ 0.004} & 0.178 {\scriptsize $\pm$ 0.004} & 0.109 {\scriptsize $\pm$ 0.004} \\
Joint & Qwen3-8B & 0.121 {\scriptsize $\pm$ 0.003} & 0.149 {\scriptsize $\pm$ 0.004} & 0.170 {\scriptsize $\pm$ 0.004} & \textbf{0.108} {\scriptsize $\pm$ 0.003} \\
Joint & Gemma4-31B & \textbf{0.112} {\scriptsize $\pm$ 0.003} & \textbf{0.139} {\scriptsize $\pm$ 0.004} & \textbf{0.163} {\scriptsize $\pm$ 0.004} & 0.109 {\scriptsize $\pm$ 0.004} \\
\midrule
\addlinespace[3pt]
\multicolumn{6}{l}{\textbf{Mean topic gASCE} $\downarrow$} \\
\addlinespace[3pt]
Topic-only & --- & \textbf{0.008} {\scriptsize $\pm$ 0.001} & 0.015 {\scriptsize $\pm$ 0.002} & 0.014 {\scriptsize $\pm$ 0.002} & 0.009 {\scriptsize $\pm$ 0.002} \\
CoT-only & Qwen3-4B & 0.013 {\scriptsize $\pm$ 0.001} & 0.022 {\scriptsize $\pm$ 0.002} & 0.020 {\scriptsize $\pm$ 0.002} & 0.031 {\scriptsize $\pm$ 0.003} \\
CoT-only & Qwen3-8B & 0.013 {\scriptsize $\pm$ 0.001} & 0.030 {\scriptsize $\pm$ 0.002} & 0.021 {\scriptsize $\pm$ 0.002} & 0.030 {\scriptsize $\pm$ 0.002} \\
CoT-only & Gemma4-31B & 0.010 {\scriptsize $\pm$ 0.001} & 0.018 {\scriptsize $\pm$ 0.002} & 0.017 {\scriptsize $\pm$ 0.002} & 0.032 {\scriptsize $\pm$ 0.002} \\
Joint & Qwen3-4B & 0.010 {\scriptsize $\pm$ 0.001} & 0.015 {\scriptsize $\pm$ 0.002} & 0.016 {\scriptsize $\pm$ 0.002} & 0.011 {\scriptsize $\pm$ 0.001} \\
Joint & Qwen3-8B & 0.010 {\scriptsize $\pm$ 0.001} & 0.018 {\scriptsize $\pm$ 0.002} & 0.015 {\scriptsize $\pm$ 0.002} & \textbf{0.009} {\scriptsize $\pm$ 0.001} \\
Joint & Gemma4-31B & 0.010 {\scriptsize $\pm$ 0.001} & \textbf{0.014} {\scriptsize $\pm$ 0.002} & \textbf{0.014} {\scriptsize $\pm$ 0.002} & 0.012 {\scriptsize $\pm$ 0.001} \\
\midrule
\addlinespace[3pt]
\multicolumn{6}{l}{\textbf{Worst-topic gASCE} $\downarrow$} \\
\addlinespace[3pt]
Topic-only & --- & 0.031 {\scriptsize $\pm$ 0.017} & 0.066 {\scriptsize $\pm$ 0.021} & \textbf{0.046} {\scriptsize $\pm$ 0.018} & 0.090 {\scriptsize $\pm$ 0.031} \\
CoT-only & Qwen3-4B & 0.060 {\scriptsize $\pm$ 0.023} & 0.077 {\scriptsize $\pm$ 0.020} & 0.065 {\scriptsize $\pm$ 0.020} & 0.264 {\scriptsize $\pm$ 0.056} \\
CoT-only & Qwen3-8B & 0.048 {\scriptsize $\pm$ 0.019} & 0.158 {\scriptsize $\pm$ 0.030} & 0.063 {\scriptsize $\pm$ 0.019} & 0.237 {\scriptsize $\pm$ 0.052} \\
CoT-only & Gemma4-31B & 0.029 {\scriptsize $\pm$ 0.013} & 0.092 {\scriptsize $\pm$ 0.022} & 0.057 {\scriptsize $\pm$ 0.019} & 0.192 {\scriptsize $\pm$ 0.042} \\
Joint & Qwen3-4B & \textbf{0.022} {\scriptsize $\pm$ 0.013} & \textbf{0.047} {\scriptsize $\pm$ 0.016} & 0.067 {\scriptsize $\pm$ 0.023} & 0.094 {\scriptsize $\pm$ 0.031} \\
Joint & Qwen3-8B & 0.023 {\scriptsize $\pm$ 0.013} & 0.088 {\scriptsize $\pm$ 0.024} & 0.048 {\scriptsize $\pm$ 0.018} & \textbf{0.059} {\scriptsize $\pm$ 0.025} \\
Joint & Gemma4-31B & 0.030 {\scriptsize $\pm$ 0.015} & 0.058 {\scriptsize $\pm$ 0.017} & 0.066 {\scriptsize $\pm$ 0.018} & 0.096 {\scriptsize $\pm$ 0.030} \\
\bottomrule
\end{tabular}
\endgroup

    \caption{Results across diagnostic labelers, audited over topics. Other baselines appear in \Cref{tab:cot-prediction-performance}.}
    \label{tab:diagnostic-labeler}
\end{table*}

\xhdr{Calibration algorithm.}
To test whether the Brier gain from CoT groups depends on the calibration algorithm, we repeat the topic-only, CoT-only, and joint fits of \Cref{sec:casestudy} with iterative grouped linear binning (IGLB). IGLB and the main IGHB fits use the same responses, initial scores, correctness labels, groups, and splits. The main difference is that IGLB fits a linear correction within each selected region, while IGHB applies a constant shift.

Joint IGLB lowers held-out Brier score relative to topic-only IGLB on all four datasets (\Cref{tab:iglb-performance}). The mean and worst-topic calibration errors change in both directions. Thus, the Brier gain from adding CoT groups also appears with IGLB on these benchmarks.

\begin{table*}[htbp]
    \centering
    \begingroup
\small
\setlength{\tabcolsep}{4pt}
\renewcommand{\arraystretch}{1.15}
\begin{tabular}{@{}lrrrr@{}}
\toprule
Calibration & MMLU & MMLU-Pro & TriviaQA & PopQA \\
\midrule
\multicolumn{5}{l}{\textbf{Brier score} $\downarrow$} \\
Topic-only & 0.127 {\scriptsize $\pm$ 0.003} & 0.161 {\scriptsize $\pm$ 0.004} & 0.181 {\scriptsize $\pm$ 0.004} & 0.109 {\scriptsize $\pm$ 0.004} \\
CoT-only & 0.112 {\scriptsize $\pm$ 0.003} & 0.141 {\scriptsize $\pm$ 0.004} & 0.159 {\scriptsize $\pm$ 0.004} & 0.119 {\scriptsize $\pm$ 0.004} \\
Joint & 0.112 {\scriptsize $\pm$ 0.003} & 0.141 {\scriptsize $\pm$ 0.004} & 0.158 {\scriptsize $\pm$ 0.004} & 0.103 {\scriptsize $\pm$ 0.004} \\
\midrule
\multicolumn{5}{l}{\textbf{Mean topic gASCE} $\downarrow$} \\
Topic-only & 0.009 {\scriptsize $\pm$ 0.001} & 0.015 {\scriptsize $\pm$ 0.002} & 0.018 {\scriptsize $\pm$ 0.002} & 0.009 {\scriptsize $\pm$ 0.001} \\
CoT-only & 0.010 {\scriptsize $\pm$ 0.001} & 0.017 {\scriptsize $\pm$ 0.002} & 0.015 {\scriptsize $\pm$ 0.002} & 0.026 {\scriptsize $\pm$ 0.002} \\
Joint & 0.012 {\scriptsize $\pm$ 0.001} & 0.014 {\scriptsize $\pm$ 0.002} & 0.016 {\scriptsize $\pm$ 0.002} & 0.010 {\scriptsize $\pm$ 0.001} \\
\midrule
\multicolumn{5}{l}{\textbf{Worst-topic gASCE} $\downarrow$} \\
Topic-only & 0.042 {\scriptsize $\pm$ 0.015} & 0.051 {\scriptsize $\pm$ 0.016} & 0.063 {\scriptsize $\pm$ 0.019} & 0.062 {\scriptsize $\pm$ 0.022} \\
CoT-only & 0.023 {\scriptsize $\pm$ 0.011} & 0.071 {\scriptsize $\pm$ 0.018} & 0.049 {\scriptsize $\pm$ 0.017} & 0.165 {\scriptsize $\pm$ 0.041} \\
Joint & 0.043 {\scriptsize $\pm$ 0.016} & 0.071 {\scriptsize $\pm$ 0.017} & 0.052 {\scriptsize $\pm$ 0.016} & 0.081 {\scriptsize $\pm$ 0.030} \\
\bottomrule
\end{tabular}
\endgroup

    \caption{IGLB calibration on the held-out examples. Topic gASCE uses the same eligible topics and ten confidence bins as \Cref{tab:cot-topic-calibration}. Lower is better.}
    \label{tab:iglb-performance}
\end{table*}

\subsection{Cross-Benchmark Calibration Transfer}
\label{app:cross-benchmark-transfer}

At deployment, a reliability predictor may encounter questions whose distribution differs from the data used to calibrate it. Obtaining correctness labels and refitting the predictor for each new setting can be costly. We therefore ask to what extent calibration transfers between benchmarks without refitting on target labels. To do this, we evaluate all twelve possible combinations of transfers among MMLU, MMLU-Pro, TriviaQA, and PopQA (e.g., we might want to calibrate over MMLU but then evaluate over MMLU-Pro). Throughout this experiment, we keep the answer-generating model, Qwen3-8B, and the True/False confidence score fixed. For each direction, one benchmark supplies the data for group construction, calibration, and validation. We then freeze the resulting predictor and evaluate it on the target benchmark's held-out examples.

For each target question, we compute its embedding as in \Cref{app:topic}. We pass it through the PCA projection and Gaussian mixture fitted on the source benchmark and obtain a probability for each source topic. We then assign it to each topic whose probability exceeds the threshold set on the source data. CoT memberships use the same labeling procedure in \Cref{app:cot-groupings}. The source calibrator then uses these groups to update the initial confidence score.

\begin{table*}[htpb]
    \centering
\begingroup\small
\setlength{\tabcolsep}{7pt}
\renewcommand{\arraystretch}{1.12}
\begin{tabular}{@{}lrrrrr@{}}
\toprule
Calibrated on & Raw & HB & Topic-only & CoT-only & Joint \\
\midrule
\multicolumn{6}{@{}l}{\textbf{Evaluated on MMLU}} \\
\textbf{MMLU} & 0.2329 & 0.1366 & 0.1267 & 0.1127 & 0.1124 \\
MMLU-Pro & 0.2329 & 0.1418 & 0.1416 & 0.1177 & 0.1187 \\
TriviaQA & 0.2329 & 0.1950 & 0.1854 & 0.1284 & 0.1303 \\
PopQA & 0.2329 & 0.4385 & 0.4322 & 0.2399 & 0.2688 \\
\midrule
\multicolumn{6}{@{}l}{\textbf{Evaluated on MMLU-Pro}} \\
MMLU & 0.3591 & 0.1872 & 0.1714 & 0.1473 & 0.1465 \\
\textbf{MMLU-Pro} & 0.3591 & 0.1839 & 0.1633 & 0.1413 & 0.1388 \\
TriviaQA & 0.3591 & 0.2553 & 0.2459 & 0.1637 & 0.1684 \\
PopQA & 0.3591 & 0.4613 & 0.4586 & 0.3231 & 0.3371 \\
\midrule
\multicolumn{6}{@{}l}{\textbf{Evaluated on TriviaQA}} \\
MMLU & 0.2940 & 0.2590 & 0.2721 & 0.1835 & 0.1896 \\
MMLU-Pro & 0.2940 & 0.2466 & 0.2449 & 0.1736 & 0.1758 \\
\textbf{TriviaQA} & 0.2940 & 0.2090 & 0.1855 & 0.1644 & 0.1625 \\
PopQA & 0.2940 & 0.3189 & 0.2708 & 0.2082 & 0.2118 \\
\midrule
\multicolumn{6}{@{}l}{\textbf{Evaluated on PopQA}} \\
MMLU & 0.3822 & 0.4635 & 0.4680 & 0.2447 & 0.2591 \\
MMLU-Pro & 0.3822 & 0.4373 & 0.3655 & 0.2030 & 0.2042 \\
TriviaQA & 0.3822 & 0.2422 & 0.2195 & 0.1589 & 0.1693 \\
\textbf{PopQA} & 0.3822 & 0.1482 & 0.1142 & 0.1248 & 0.1090 \\
\bottomrule
\end{tabular}
\endgroup

    \caption{Brier score under benchmark transfer. Rows identify the calibration benchmark; bold labels mark fitting on the target's separate calibration split. Lower is better.}
    \label{tab:cross-benchmark-brier}
\end{table*}

\begin{table*}[htpb]
    \centering
\begingroup\small
\setlength{\tabcolsep}{7pt}
\renewcommand{\arraystretch}{1.12}
\begin{tabular}{@{}lrrrrr@{}}
\toprule
Calibrated on & Raw & HB & Topic-only & CoT-only & Joint \\
\midrule
\multicolumn{6}{@{}l}{\textbf{Evaluated on MMLU}} \\
\textbf{MMLU} & 0.106 & 0.011 & 0.008 & 0.010 & 0.010 \\
MMLU-Pro & 0.106 & 0.017 & 0.023 & 0.017 & 0.017 \\
TriviaQA & 0.106 & 0.074 & 0.066 & 0.026 & 0.026 \\
PopQA & 0.106 & 0.322 & 0.316 & 0.131 & 0.158 \\
\midrule
\multicolumn{6}{@{}l}{\textbf{Evaluated on MMLU-Pro}} \\
MMLU & 0.216 & 0.031 & 0.023 & 0.023 & 0.022 \\
\textbf{MMLU-Pro} & 0.216 & 0.027 & 0.015 & 0.018 & 0.014 \\
TriviaQA & 0.216 & 0.106 & 0.098 & 0.036 & 0.040 \\
PopQA & 0.216 & 0.306 & 0.312 & 0.188 & 0.198 \\
\midrule
\multicolumn{6}{@{}l}{\textbf{Evaluated on TriviaQA}} \\
MMLU & 0.115 & 0.066 & 0.094 & 0.027 & 0.033 \\
MMLU-Pro & 0.115 & 0.044 & 0.054 & 0.018 & 0.021 \\
\textbf{TriviaQA} & 0.115 & 0.022 & 0.014 & 0.017 & 0.014 \\
PopQA & 0.115 & 0.127 & 0.094 & 0.054 & 0.051 \\
\midrule
\multicolumn{6}{@{}l}{\textbf{Evaluated on PopQA}} \\
MMLU & 0.274 & 0.353 & 0.365 & 0.146 & 0.161 \\
MMLU-Pro & 0.274 & 0.323 & 0.259 & 0.106 & 0.106 \\
TriviaQA & 0.274 & 0.134 & 0.118 & 0.065 & 0.074 \\
\textbf{PopQA} & 0.274 & 0.041 & 0.009 & 0.032 & 0.012 \\
\bottomrule
\end{tabular}
\endgroup

    \caption{Membership-weighted mean gASCE over target-defined topics. Rows identify the calibration benchmark; bold labels mark fitting on the target's separate calibration split. Lower is better.}
    \label{tab:cross-benchmark-calibration}
\end{table*}

\Cref{tab:cross-benchmark-brier,tab:cross-benchmark-calibration} show that joint transfer improves Brier score and mean topic calibration error over topic-only transfer in all twelve directions. CoT-only usually transfers better than joint, with lower Brier in eleven directions and lower calibration error in ten. Joint beats raw scores on both metrics in eleven directions. Topic-only calibration transfers less well than CoT-only calibration. This may be because source topics do not represent target questions well, whereas the diagnostic definitions remain unchanged. As expected, joint calibration fitted on the evaluation benchmark has lower Brier than joint calibration transferred from another benchmark in every direction.

\section{Extension to Safety Task}
\label{app:safety}

We extend the application of reasoning-aware multicalibration to BeaverTails \citep{ji2023beavertailsimprovedsafetyalignment}, an established safety benchmark. Qwen3-8B reasons about the safety of a stored response given its user request; the target is the human unsafe-response label. Then, a separate call provides the initial probability through the True/False token score. We use \texttt{330k\_train} split, remove request–response pairs with inconsistent safety labels, and retain one response per distinct request using a reproducible selection rule that is independent of the label. The resulting dataset contains 15,461 examples, of which 4,680 are used for evaluation.

Similarly to the protocol in \Cref{sec:casestudy}, we use UAE request-embedding topic groups and multicalibrate via IGHB. We use Gemma4-31B to label the original ten diagnostics and also add five additional questions about safety reasoning (\Cref{tab:safety-questions}). The labeler sees only the CoT of the model being monitored. Scores, targets, examples, splits, and topic groups are fixed across conditions. 
Joint calibration with fifteen diagnostics has the lowest Brier score (\Cref{tab:safety-performance}). Mean topic calibration error changes little, while worst-topic error increases. CoT-only calibration improves with the added questions but remains close to topic-only Brier performance. 

\begin{table}[htbp]
    \centering
    \begingroup\small
\setlength{\tabcolsep}{6pt}
\renewcommand{\arraystretch}{1.15}
\begin{tabular}{@{}lrrr@{}}
\toprule
Calibration & Brier & Mean topic gASCE & Worst-topic gASCE \\
\midrule
Uncalibrated & 0.166 {\scriptsize $\pm$ 0.005} & 0.047 {\scriptsize $\pm$ 0.003} & 0.552 {\scriptsize $\pm$ 0.056} \\
Topic-only & 0.107 {\scriptsize $\pm$ 0.003} & 0.011 {\scriptsize $\pm$ 0.001} & 0.068 {\scriptsize $\pm$ 0.017} \\
CoT-only (10) & 0.118 {\scriptsize $\pm$ 0.003} & 0.019 {\scriptsize $\pm$ 0.001} & 0.214 {\scriptsize $\pm$ 0.038} \\
Joint (10) & 0.106 {\scriptsize $\pm$ 0.003} & \textbf{0.010} {\scriptsize $\pm$ 0.001} & \textbf{0.062} {\scriptsize $\pm$ 0.017} \\
CoT-only (15) & 0.108 {\scriptsize $\pm$ 0.003} & 0.016 {\scriptsize $\pm$ 0.001} & 0.177 {\scriptsize $\pm$ 0.037} \\
Joint (15) & \textbf{0.102} {\scriptsize $\pm$ 0.003} & 0.010 {\scriptsize $\pm$ 0.001} & 0.106 {\scriptsize $\pm$ 0.023} \\
\bottomrule
\end{tabular}\endgroup

    \caption{Brier score and topic calibration for the safety prediction task, with ten or fifteen diagnostics, audited over topics. Values are estimates $\pm$ one bootstrap standard error (\Cref{sec:casestudy}). Bold marks the lowest point estimate in each column.}
    \label{tab:safety-performance}
\end{table}

\begin{table}[htbp]
    \centering
    \begingroup\small
\renewcommand{\arraystretch}{1.12}
\begin{tabular}{@{}p{0.19\linewidth}p{0.66\linewidth}r@{}}
\toprule
CoT property & Diagnostic question & Yes \\
\midrule
\raggedright Request and response & \raggedright Does the trace explicitly distinguish the user's request from the response being evaluated when explaining its safety judgment? & 4,399 \\
\addlinespace[4pt]
\raggedright Discussion and assistance & \raggedright Does the trace explicitly distinguish discussing or describing a harmful activity from providing practical help to carry it out? & 524 \\
\addlinespace[4pt]
\raggedright Harm mechanism & \raggedright Does the trace describe a concrete way that the response could cause or facilitate harm? Do not count a bare harm-category name or the word unsafe. & 1,250 \\
\addlinespace[4pt]
\raggedright Refusals and disclaimers & \raggedright Does the trace assess whether a refusal, warning, or disclaimer changes the safety of the rest of the response? Do not count merely noting that a refusal, warning, or disclaimer is present. & 183 \\
\addlinespace[4pt]
\raggedright Context dependence & \raggedright Does the trace explicitly make its safety judgment depend on an assumption or condition about intent, authorization, audience, or use context? Do not count merely mentioning context. & 352 \\
\addlinespace[4pt]
\bottomrule
\end{tabular}\endgroup

    \caption{Additional safety diagnostics questions. The ``Yes'' column reports the distribution of positive answers across the questions.}
    \label{tab:safety-questions}
\end{table}

\end{document}